\def\showauthornotes{0}
\def\showtableofcontents{1}
\def\showkeys{0}
\def\showdraftbox{0}
\def\showcolorlinks{1}
\def\usemicrotype{1}
\def\showfixme{0}
\def\writemode{0}
\newcommand{\Mat}{\mathsf{Mat}}
\newcommand{\Ten}{\mathsf{Ten}}
\newtheorem{theorem}{Theorem}[section]
\newtheorem*{theorem*}{Theorem}
\newtheorem*{proposition*}{Proposition}
\newtheorem{lemma}[theorem]{Lemma}
\newtheorem*{lemma*}{Lemma}
\newtheorem*{conjecture*}{Conjecture}
\newtheorem{fact}[theorem]{Fact}
\newtheorem*{fact*}{Fact}
\newtheorem{hypothesis}[theorem]{Hypothesis}
\newtheorem*{hypothesis*}{Hypothesis}
\theoremstyle{definition}
\newtheorem{definition}[theorem]{Definition}
\theoremstyle{remark}
\newtheorem*{claim*}{Claim}
\newtheorem*{remark*}{Remark}
\newtheorem*{observation*}{Observation}
\let\mathbb\varmathbb
\crefname{lemma}{Lemma}{Lemmas}
\crefname{definition}{Definition}{Definitions}
\newcommand{\Sref}[1]{\hyperref[#1]{\S\ref*{#1}}}
\newcommand{\Authornote}[2]{{\sffamily\small\color{red}{[#1: #2]}}}
\newcommand{\Authornotecolored}[3]{{\sffamily\small\color{#1}{[#2: #3]}}}
\newcommand{\Authorcomment}[2]{{\sffamily\small\color{gray}{[#1: #2]}}}
\newcommand{\Authorstartcomment}[1]{\sffamily\small\color{gray}[#1: }
\newcommand{\Authorfnote}[2]{\footnote{\color{red}{#1: #2}}}
\newcommand{\Authorfixme}[1]{\Authornote{#1}{\textbf{??}}}
\newcommand{\Authormarginmark}[1]{\marginpar{\textcolor{red}{\fbox{\Large #1:!}}}}
\newcommand{\Authornote}[2]{}
\newcommand{\Authornotecolored}[3]{}
\newcommand{\Authorcomment}[2]{}
\newcommand{\Authorstartcomment}[1]{}
\newcommand{\Authorfnote}[2]{}
\newcommand{\Authorfixme}[1]{}
\newcommand{\Authormarginmark}[1]{}
\newcommand{\Dnote}{\Authornote{D}}
\definecolor{forestgreen(traditional)}{rgb}{0.0, 0.27, 0.13}
\newcommand{\set}[1]{\{#1\}}
\newcommand{\Esymb}{\mathbb{E}}
\newcommand{\Psymb}{\mathbb{P}}
\DeclareMathOperator*{\E}{\Esymb}
\DeclareMathOperator*{\ProbOp}{\Psymb}
\renewcommand{\Pr}{\ProbOp}
\newcommand{\textparen}[1]{\text{(#1)}}
\newcommand{\because}[1]{\textparen{because #1}}
\renewcommand{\because}[1]{\textparen{because #1}}
\newcommand\bdot\bullet
\newcommand{\etal}{et al.\xspace}
\newcommand{\cB}{\mathcal B}
\newcommand{\cS}{\mathcal S}
\newcommand{\cV}{\mathcal V}
\renewcommand{\leq}{\leqslant}
\renewcommand{\geq}{\geqslant}
\newcommand{\draftbox}{\begin{center}
  \fbox{%
    \begin{minipage}{2in}%
      \begin{center}%
          \Large\textsc{Working Draft}\\%
        Please do not distribute%
      \end{center}%
    \end{minipage}%
  }%
\end{center}
\vspace{0.2cm}}
\newcommand{\draftbox}{}
\let\epsilon=\varepsilon
\numberwithin{equation}{section}
\newcommand\MYcurrentlabel{xxx}
\newcommand{\MYstore}[2]{%
  \global\expandafter \def \csname MYMEMORY #1 \endcsname{#2}%
}
\newcommand{\MYload}[1]{%
  \csname MYMEMORY #1 \endcsname%
}
\newcommand{\MYnewlabel}[1]{%
  \renewcommand\MYcurrentlabel{#1}%
  \MYoldlabel{#1}%
}
\newcommand{\MYdummylabel}[1]{}
\newcommand{\torestate}[1]{%
  \let\MYoldlabel\label%
  \let\label\MYnewlabel%
  #1%
  \MYstore{\MYcurrentlabel}{#1}%
  \let\label\MYoldlabel%
}
\newcommand{\restatetheorem}[1]{%
  \let\MYoldlabel\label
  \let\label\MYdummylabel
  \begin{theorem*}[Restatement of \prettyref{#1}]
    \MYload{#1}
  \end{theorem*}
  \let\label\MYoldlabel
}
\newcommand{\restatelemma}[1]{%
  \let\MYoldlabel\label
  \let\label\MYdummylabel
  \begin{lemma*}[Restatement of \prettyref{#1}]
    \MYload{#1}
  \end{lemma*}
  \let\label\MYoldlabel
}
\newcommand{\restateprop}[1]{%
  \let\MYoldlabel\label
  \let\label\MYdummylabel
  \begin{proposition*}[Restatement of \prettyref{#1}]
    \MYload{#1}
  \end{proposition*}
  \let\label\MYoldlabel
}
\newcommand{\restatefact}[1]{%
  \let\MYoldlabel\label
  \let\label\MYdummylabel
  \begin{fact*}[Restatement of \prettyref{#1}]
    \MYload{#1}
  \end{fact*}
  \let\label\MYoldlabel
}
\newcommand{\restate}[1]{%
  \let\MYoldlabel\label
  \let\label\MYdummylabel
  \MYload{#1}
  \let\label\MYoldlabel
}
\newcommand{\addreferencesection}{
  \phantomsection
  \addcontentsline{toc}{section}{References}
}
\let\origparagraph\paragraph
\renewcommand{\paragraph}[1]{\origparagraph{#1.}}
\DeclareMathOperator{\Span}{Span}
\DeclareMathOperator{\F}{\mathbb{F}}
\DeclareUrlCommand\email{}
\renewcommand{\cal}{\mathcal}
\newcounter{boxeddenv}
\newenvironment{boxedd}[1]
  {
 \mdfsetup{
    frametitle={ \colorbox{white}{\space Test \theboxeddenv: \space #1\space}},
    innertopmargin=10pt,
    frametitleaboveskip=-\ht\strutbox,
    frametitlealignment=\center
    }
  \begin{mdframed} %
  }
  {\refstepcounter{boxeddenv} \end{mdframed}
   
   }
\title{Small-Set Expansion in Shortcode Graph and the 2-to-2 Conjecture}
\author{
Boaz Barak \thanks{Harvard University, \protect \email{b@boazbarak.org}. Supported by NSF awards CCF 1565264 and CNS 1618026, and the Simons Foundation. Part of the work done while the author visited Weizmann Institute during Spring 2017.}
\and
Pravesh K. Kothari \thanks{Princeton University and IAS \protect \email{kothari@cs.princeton.edu}. Work done while the author visited Weizmann Institute in March 2017.
}
\and
David Steurer \thanks{ETH Zurich \protect \email{dsteurer@cs.cornell.edu}. Work done while the author was a member of the Institute for Advanced Study, Princeton.}
}
\begin{document}

\maketitle
 \draftbox
\thispagestyle{empty}
\begin{abstract}
Dinur, Khot, Kindler, Minzer and Safra~\cite{DBLP:journals/eccc/DinurKKMS16} recently showed that the (imperfect completeness variant of) Khot's 2 to 2 games conjecture follows from  a  combinatorial hypothesis about the soundness of a certain ``Grassmanian agreement tester''.
In this work, we show that  hypothesis of Dinur et al  follows from a conjecture we call the ``Inverse Shortcode Hypothesis'' characterizing the non-expanding sets of the degree-two shortcode graph. We also show the latter conjecture is equivalent to a characterization of the non-expanding sets in the Grassman graph, as hypothesized by a follow-up paper of Dinur \etal ~\cite{DBLP:journals/eccc/DinurKKMS17}.

Following our work, Khot, Minzer and Safra~\cite{KMS} proved the ``Inverse Shortcode Hypothesis''. Combining their proof with our result  and the reduction of \cite{DBLP:journals/eccc/DinurKKMS16}, completes the proof of the 2 to 2 conjecture with imperfect completeness.
Moreover, we believe that the shortcode graph provides a useful view  of both the hypothesis and the reduction, and might be useful in extending it further.
\end{abstract}

\clearpage

\ifnum\showtableofcontents=0
{
\tableofcontents
\thispagestyle{empty}
 }
\fi

\clearpage

\setcounter{page}{1}

\newcommand{\Proj}{\mathsf{Proj}}
\newcommand{\bB}{\mathcal{B}}
\newcommand{\Lin}{\mathsf{LIN}}
\newcommand{\grassmann}{\mathcal{G}}

\section{Introduction}

In~\cite{Khot2}, Subhash Khot put forward a family of conjectures known as the ``$d$-to-$d$ games conjectures''. A binary constraint $P(x_1, x_2)$ where $x_i$s take values in alphabet $\Sigma$ is said to be \emph{d-to-d} if for every value to $x_1$, there are exactly $d$ values for $x_2$ that satisfy $P$ and vice-versa. For any $d$, the ``$d$-to-$d$ games conjecture'' roughly says that for every $\epsilon>0$, there is some finite  alphabet $\Sigma$ such that it is NP-hard to distinguish, given a constraint satisfaction problem with d-to-d constraints, whether it is possible to satisfy at least $1-\epsilon$ fraction of the constraints, or if every assignment satisfies at most $\epsilon$ fraction of the constraints. 
\footnote{For $d>1$, the conjectures are often stated in their \textit{perfect completeness variant}, where we replace $1-\epsilon$ with $1$ in the first case. In this work (as well as all the line of works following \cite{DBLP:conf/stoc/KhotMS17}), we refer to the imperfect completeness version as stated above.}
The case of $d=1$ corresponds to the more famous \textit{Unique Games Conjecture}, but until recently there was no constant $d$ for which the corresponding d-to-d conjecture was known to be true.

Dinur, Khot, Kindler, Minzer, and Safra~\cite{DBLP:journals/eccc/DinurKKMS16}, building on ideas of Khot, Minzer and Safra \cite{DBLP:conf/stoc/KhotMS17}, recently initiated an approach towards proving the $2$-to-$2$ conjecture, based on a certain combinatorial hypothesis positing the soundness of the ``Grassmann agreement test''. 

In this work we show that their hypothesis follows from a certain natural hypothesis characterizing the structure of non-expanding sets in the degree two shortcode graph~\cite{MR3416138-Barak15}. Following our work, Khot, Minzer and Safra~\cite{KMS} proved the latter hypothesis thus completing the proof of the 2-to-2 games conjecture.
This has several implications to hardness of approximation including improving on the NP-hardness of approximation for Vertex Cover along with a host of other improved NP-hardness results. Perhaps more importantly, this also gives a strong evidence for the truth of the Unique Games Conjecture itself. We defer to \cite{DBLP:journals/eccc/DinurKKMS16,DBLP:journals/eccc/DinurKKMS17,KMS} for a detailed discussion on the importance of the 2-to-2 games conjecture, as well as the reduction of this conjecture to showing the soundness of the Grassmann agreement tester.

\subsection{Our Results}

Our main result reduces the task of proving the ``Grassmann agreement hypothesis'' of Dinur et al \cite[Hypothesis 3.6]{DBLP:journals/eccc/DinurKKMS16} to characterizing the structure of non-expanding sets in the associated Grassmann graph.

\begin{itemize}

\item We show that the Grassmann agreement hypothesis  \cite[Hypothesis 3.6]{DBLP:journals/eccc/DinurKKMS16} follows from the Grasmann Expansion Hypothesis \cite[Hypothesis 1.7]{DBLP:journals/eccc/DinurKKMS17}.  

\item We describe the related Shortcode test and the associated agreement and expansion hypothesis and relate them to the Grassmann versions above.



\end{itemize}

The above, combined with the work of \cite{DBLP:journals/eccc/DinurKKMS16,KMS}, suffices to prove the 2-to-2 conjecture.
However we note that it is possible to directly obtain a proof of the 2-to-2 conjecture (see the recent exposition at \cite{dkkmsnotes}) using the ``Inverse Shortcode Hypothesis'' without going through the Grassmann graph at all. We think the shortcode view provides a  natural way to understand the reduction and suggests potential extensions, see Section~\ref{sec:discuss}.

\subsection{Grassmann Graph and DKKMS Consistency Test}

To state our results formally, we need to define the Grassman and shortcode graphs, which we now do.
The Grassmann graph $\grassmann(\ell,n)$ with parameters $\ell,n$ has vertices given by all $\ell$-dimensional subspaces (denoted by $\cV_{\ell}$) of $\F_2^n.$
Two subspaces $V,V'$ of $\F_2^n$ have an edge between them if $\dim(V \cap V') = \ell-1$.

Let $\Lin(\F_2^n)$ be the set of all linear functions $\F_2^n \rightarrow \F_2.$ For every $f \in \Lin(F_2^n)$, let $F_f$ be the map that assign to every $V \in \cV_{\ell}$, $F_f(V) = f_{\mid V}$ the restriction of the linear function $f$ to the subspace $V$.
Let $\Lin(\ell,n)=\set{F_f \mid f\in \Lin(\F_2^n)}$ be the set of all such maps.

The Grassmann Consistency test is a two-query test for $\Lin(\ell,n)$ described below:

\begin{boxedd}{Grassmann Consistency Test}
  \begin{description}
    \item[Given:]
      a map $F$ from $\cV_{\ell} \rightarrow \Lin(F_2^{\ell})$ that maps any $V \in \cV_{\ell}$ to $F(V)$ a linear function on $V$.
    \item[Operation:]\mbox{}
      \begin{enumerate}
      	\item Pick an edge $(V,V')$ of $\grassmann(\ell,n)$ uniformly at random.
		\item Receive $F(V), F(V') \in \Lin(\ell,n)$. 
		\item Accept if $F(V)_{V \cap V'} = F(V')_{V \cap V'}$ otherwise reject.
      \end{enumerate}
    \end{description}
\end{boxedd}



It is easy to see the following completeness of the Grassmann graph test. 
\begin{fact}[Completeness]
Suppose $F \in \Lin(\ell,n).$ Then,  $F$ passes the Grassman Consistency test with probability $1$.
\end{fact}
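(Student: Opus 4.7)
The plan is to unpack the definition of $\Lin(\ell,n)$ and show that any map of the form $F_f$ is, by construction, consistent on the overlap of any two subspaces. Concretely, I would fix $f \in \Lin(\F_2^n)$ with $F = F_f$ and consider an arbitrary edge $(V,V')$ of $\grassmann(\ell,n)$, so that $W \seteq V \cap V'$ satisfies $\dim(W) = \ell - 1$ and is a subspace of both $V$ and $V'$.

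Next I would observe that restriction commutes with further restriction: since $F(V) = f_{|V}$ is the restriction of the global linear function $f$ to $V$, restricting it once more to $W \subseteq V$ yields exactly $f_{|W}$. The same reasoning applied to $V'$ gives $F(V')_{|W} = f_{|W}$ as well. Therefore $F(V)_{|V \cap V'} = F(V')_{|V \cap V'}$, and the test accepts deterministically. Since this holds for every edge, the acceptance probability over the uniform choice of edge is exactly $1$.

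There is no real obstacle here; the statement is essentially a tautology given the definition of $\Lin(\ell,n)$ as the image of global linear functions under the restriction map $f \mapsto (V \mapsto f_{|V})$. The only thing worth making explicit is the transitivity of restriction, which is what forces the two local views $F(V)$ and $F(V')$ to agree on the shared subspace $V \cap V'$.
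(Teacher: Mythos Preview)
Your proposal is correct and matches the paper's treatment: the paper states this fact without proof, simply noting that it is ``easy to see,'' and your argument via transitivity of restriction is exactly the obvious verification one would supply. There is nothing to add.
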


The DKKMS hypothesis conjectures a precise version of soundness of the Grassmann Consistency Test. 

\begin{hypothesis}[DKKMS Soundness Hypothesis] \label{hypothesis:soundness}
For every $\delta > 0$, there exists $\epsilon > 0$, and an integer $r > 0$ such that following holds for sufficiently large $n \gg \ell.$

Let $F:\cV_{\ell} \rightarrow \Lin(\F_2^{\ell})$ such that $\Pr_{(V,V') \sim \grassmann(\ell,n)}[ F(V)_{V \cap V'} = F(V')_{V \cap V'}] \geq \delta.$ 
Then, there exist subspaces $Q, W \subseteq \F_2^{n}$ of dimensions $r$ and $n-r$ respectively and a $f \in \Lin(\F_2^{n})$ such that 
\[
\Pr_{V \sim \cV_{\ell}, Q \subseteq V \subseteq W} [ F(V) = f_V] \geq \epsilon.
\]
\end{hypothesis}

\subsection{Shortcode Graph and Consistency Test}

We now define the closely related \emph{Degree 2 Shortcode} graph and a immediate analog of the Grassmann consistency test on this graph. For parameters $\ell,n$ as before, the vertices of the degree 2 shortcode graph $\cS_{\ell,n}$ are elements of $\Mat_{\ell,n}$, that is, all matrices on $\F_2$ with dimensions $\ell \times n$. Two vertices $M_1$ and $M_2$ have an edge between them if $M_1-M_2$ is a rank 1 matrix over the field $\F_2$. The 2 query codeword test on this graph is entirely analogous to the one above for the Grassmann graph: 

\begin{boxedd}{Degree 2 Shortcode Consistency Test}
  \begin{description}
    \item[Given:] a map $F$ from $\Mat_{\ell,n} \rightarrow \F_2^{\ell}$.
    \item[Operation:]\mbox{}
      \begin{enumerate}
\item Pick $M_1 \sim \Mat_{\ell,n}$ and a rank 1 matrix $ab^{\top}$ for vectors $a \in \F_2^{\ell}$, $b \in \F_2^{n}$ all uniformly at random from their respective domains. Let $M_2 = M_1 + ab^{\top}$.
\item Receive $F(M_1),F(M_2) \in \F_2^{\ell}$.
\item Accept if $F(M_2) \in \{ F(M_1), F(M_1) + a\}$. 
      \end{enumerate}
    \end{description}
    \label{test:deg-2-shortcode}
\end{boxedd}

Just as the Grassmann consistency test, the above shortcode consistency test is "2-to-2" constraint and the following completeness is easy to establish.

\Dnote{linear vs affine linear?}

\begin{fact}[Completeness]
Let $f:\F_2^n \rightarrow \F_2$ be any affine linear function. Let $F = F_f:\Mat_{\ell,n} \rightarrow \F_2^{\ell}$ be the map that evaluates $f$ on each row the input matrix. Then, $F$ passes the shortcode consistency test with probability 1.
\end{fact}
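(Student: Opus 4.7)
The plan is to unfold definitions and compute $F(M_2)$ directly in terms of $F(M_1)$ and the rank-one perturbation $ab^\top$. Since $f$ is affine linear, write $f(x) = \iprod{c,x} + d$ for some $c \in \F_2^n$ and $d \in \F_2$. Then evaluating $f$ row by row on any $M \in \Mat_{\ell,n}$ gives the closed form
\[
F(M) \;=\; Mc + d\cdot \1_\ell,
\]
where $\1_\ell \in \F_2^\ell$ is the all-ones vector. This is the key observation: on affine linear $f$, the map $F_f$ is itself an affine linear function of the matrix $M$ viewed as an $\ell\times n$ array.

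Now I would plug in $M_2 = M_1 + ab^\top$ and use linearity over $\F_2$:
\[
F(M_2) \;=\; (M_1 + ab^\top)c + d\cdot \1_\ell \;=\; F(M_1) + a\,(b^\top c).
\]
The scalar $b^\top c$ lives in $\F_2$, so $(b^\top c)\,a$ is either $0$ or $a$. Hence $F(M_2) \in \{F(M_1),\, F(M_1)+a\}$, which is exactly the acceptance condition of the Degree 2 Shortcode Consistency Test. Since the argument is deterministic for every choice of $M_1$, $a$, and $b$, the test accepts with probability $1$.

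There is essentially no obstacle here; the only subtlety to flag is that the two acceptance branches correspond to the two values of the inner product $b^\top c \in \F_2$, which is the combinatorial manifestation of the ``$2$-to-$2$'' nature of the constraint. This also makes transparent why the parameter $a$ from the edge perturbation appears in the allowed additive shift, and it foreshadows the soundness question: any $F$ that passes the test with noticeable probability should look locally like $Mc + d\1_\ell$ on many shortcode edges, which is the content of the inverse shortcode hypothesis discussed later in the paper.
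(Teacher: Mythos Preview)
Your proof is correct and is exactly the direct computation the paper has in mind; the paper itself does not spell out a proof, merely noting that completeness ``is easy to establish,'' and your argument is the natural way to establish it.
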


The analogous soundness hypothesis can now be stated as:
\begin{hypothesis}[Degree 2 Shortcode Soundness Hypothesis] \label{hypothesis:soundness-shortode}
For every $\delta > 0$, there exists $\epsilon > 0$, and an integer $r > 0$ such that following holds for sufficiently large $n \gg \ell.$

Let $F:\Mat_{\ell,n} \rightarrow \F_2^{\ell}$ such that $\Pr_{M \sim \Mat_{\ell,n}, a \sim \F_2^{\ell}, b \sim \F_2^{n}} [F(M + ab^{\top}) \in \{ F(M), F(M)+a\}] \geq \delta.$ 
Then, there exists linear constraints $(q_i,t_i)$ and $(r_i,s_i)$ for $i \leq r$ and a $z \in \F_2^{n}, u \in \F_2^{\ell}$ such that 
\[
\Pr_{M \sim \Mat_{\ell,n}} [ F(M) = Mz + u \mid Mq_i = t_i, r_i^{\top}M = s_i \text { } \forall i \leq r] \geq \epsilon.
\]
\end{hypothesis}

\subsection{Soundness vs Small-Set Expansion in Grasmann/Shortcode Graphs}
Recall that for a regular graph $G$, the expansion of a set $S$ of vertices is the probability, that a random walk beginning at a uniformly random vertex in $S$ steps out of $S$. That is, $\Phi_G(S) = \Pr_{v \sim S, v' \sim v}[ v' \not \in S].$

The DKKMS Soundness Hypothesis implies a natural characterization small non-expanding sets in the $\grassmann(\ell,n)$ noted below as Hypothesis \ref{hypothesis:expansion}. Similarly, the degree 2 shortcode soundness hypothesis implies a natural characterization of non-expanding sets in $\cS_{\ell,n}$. We include a brief overview of the argument here and refer the reader to the more extensive commentary in Section 1.3 of \cite{DBLP:journals/eccc/DinurKKMS16} for further details. 

Suppose $A_1, A_2, \ldots, A_r$ are ``non-expanding'' sets that cover a constant fraction of vertices in $\grassmann(\ell,n).$ We construct a labeling strategy $F$ by choosing $r$ uniformly random linear functions $f_i:\F_2^n \rightarrow \F_2$ and setting $F(V) = f_i$ if $V \sim A_i$ and $F(V)$ is a random linear function otherwise. Clearly, $F$ doesn't agree with a single linear function on significantly more than $1/r$ fraction of the vertices in $\cV_{\ell}.$ On the other hand, if $A_i$s are sufficiently non-expanding, then, a random edge will lie inside one of the $A_i$s with a non-trivially large probability and thus $F$ will satisfy the Grassmann consistency test. In this, case, we will hope that there are subspaces $Q,W$ of constant dimension and co-dimension, respectively such that restricting to subspaces $V \in \cV_{\ell}(Q,W)$ (where $\cV_{\ell}(Q,W)$ is the subset $V \in \cV_{\ell}$ such that $V \subseteq W$) implies that $F(V) = f_{V}$ for some fixed global linear function $f$. This can happen in the above example for $F$ only if there are $Q,W$ as above such that one of the $\frac{|A_i \cap \cV_{\ell}(Q,W)|}{|\cV_{\ell}(Q,W)}$ is $\Omega(1)$ (i.e. independent of $\ell,n$). Thus, Hypothesis \ref{hypothesis:soundness} forces that the non-expanding sets $A_i$ to be ``structured'' (in the sense of having a large density inside $\cV_{\ell}(Q,W)$ for some $Q,W$ of constant dimension and co-dimension, respectively.) This can be interpreted as saying that the non-expansion of any set of vertices in $\grassmann(\ell,n)$ can be ``explained'' away by a more than typical density in one of the canonical non-expanding sets (i.e., those that contain a subspace $Q$ and are contained inside a subspace $W$ of constant dimension and co-dimension, respectively.)

To formally state the Grassmann Expansion Hypothesis, we define the special non-expanding sets (referred to as ``zoom-in'' and ``zoom-outs'' in \cite{DBLP:journals/eccc/DinurKKMS17}):

\begin{definition}[Nice Sets in Grassmann Graph]
A subset $S \subseteq \cV_{\ell}$ of vertices in $\grassmann(\ell,n)$ is said to be $r$-nice if there are subspace $Q,W$ of $\F_2^{n}$ of dimension and co-dimension $r_1, r_2$ respectively such that $r_1 + r_2 = r$ and $S = \{ V \subseteq \cV_{\ell} \mid Q \subseteq V \subseteq W\}.$
\end{definition}

\begin{hypothesis}[Grassmann Expansion Hypothesis]
For every $\eta > 0$, there exists $\delta, r$ depending only on $\eta$ such that if $S \subseteq \cV_{\ell}$ satisfies $\Phi_{\grassmann(\ell,n)}(S) < \eta$, then, there are subspaces $Q,W$ over $\F_2^n$ of dimension and co-dimension $r_1, r_2$ satisfying $r_1 + r_2 \leq r$ respectively, such that $\Pr_{V: Q \subseteq V  \subseteq W} [ V\in S] \geq \delta.$ \label{hypothesis:expansion} 
\end{hypothesis}

Analogously, we can define nice sets in the degree 2 shortcode graph and state the expansion hypothesis. We call $Q$, a \emph{right} affine subspace of matrices in $\Mat_{\ell,n}$ if there are pairs $(q_i,t_i)$ and every $M \in Q$ satisfies $Mq_i = t_i$. We define a \emph{left} affine subspace analogously.

\begin{definition}[Nice Sets in Degree 2 Shortcode Graph]
A subset $S \subseteq \cS_{\ell,n}$ is said to be $r$-nice if it is an intersection of a left and right affine subspace in $\Mat_{\ell,n}$ with sum of the dimensions $r$. 
\end{definition}

\begin{hypothesis}[Inverse Shortcode Hypothesis]
For every $\eta > 0$, there exist $\delta, r$ depending only on $\eta$ such that for every subset $S \subseteq \Mat_{\ell,n}$, if $\Pr_{M \sim S, a \sim \F_2^{\ell}, b \sim \F_2^{n}} [ M + ab^{\top} \in S] \geq \eta$, then, there exists an $r$-nice set $\cal{T} \subseteq \cS_{\ell,n}$ such that $|S \cap \cal{T}| \geq \delta |\cal{T}|$.  \label{conj:inv-short-code}
\end{hypothesis}

While Hypotheses \ref{hypothesis:soundness} and \ref{hypothesis:soundness-shortode} posit soundness of a specific ``code-word consistency'' test associated with the Grassmann/Shortcode graphs, Hypotheses \ref{hypothesis:expansion} and \ref{conj:inv-short-code} ask for a purely graph theoretic property: a characterization of non-expanding sets in $\grassmann(\ell,n)$ and $\cS_{\ell,n}$. 
As such, it appears easier to attack and \cite{DBLP:journals/eccc/DinurKKMS16} thus suggested understanding the structure of non-expanding sets in $\grassmann(\ell,n)$ as a natural first step. As we show in this note, proving Hypothesis \ref{conj:inv-short-code} is in fact enough to show Hypothesis \ref{hypothesis:soundness}. In a follow up work \cite{KMS}, this result was used in  to complete the proof of the DKKMS soundness hypothesis. 


\subsection{Our Results}
We are now ready to state our main results formally.

First, we show that the soundness of the shortcode consistency test follows from the expansion hypothesis for the shortcode graph.
\begin{theorem} \label{thm:soundness=expansion}
The degree 2 Shortcode Expansion Hypothesis \ref{conj:inv-short-code} implies the Degree 2 Shortcode Soundness Hypothesis \ref{hypothesis:soundness-shortode}.

\end{theorem}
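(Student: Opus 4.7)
The strategy is to convert a good assignment $F$ for the shortcode consistency test into a non-expanding subset of $\Mat_{\ell,n}$, and then invoke the Inverse Shortcode Hypothesis~\ref{conj:inv-short-code} to extract the desired structure. Assume $F$ passes the test with probability $\delta$. The candidate explanations are the affine-linear functions $M \mapsto Mz + u$ for $(z,u) \in \F_2^n \times \F_2^\ell$, so for each such pair I define the agreement set
\[
T_{z,u} \;=\; \bigl\{ M \in \Mat_{\ell, n} : F(M) = Mz + u \bigr\}.
\]
The plan is to identify some $(z^*, u^*)$ for which $T_{z^*, u^*}$ has non-expansion at least $\delta/2$ in the degree-$2$ shortcode graph, and then feed that set to Hypothesis~\ref{conj:inv-short-code}.

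The key step is a double-counting identity. Both $M$ and $M + ab^\top$ lie in $T_{z,u}$ precisely when $F(M) = Mz + u$ and $F(M + ab^\top) - F(M) = a(b^\top z)$. Whenever the test accepts at $(M, a, b)$ the left side of the latter equation lies in $\{0, a\}$; since $a(b^\top z)$ takes each of the values $0$ and $a$ with probability exactly $\nicefrac12$ over uniform $z \in \F_2^n$ (the case $b = 0$ being trivial because then $M + ab^\top = M$), the matching condition is satisfied for at least half of the $z$'s. Each $M$ lies in exactly $2^n$ of the sets $T_{z,u}$ (one per $z$, with $u$ determined), so summing over $M$ while exploiting that the overall test acceptance probability is $\delta$ will yield
\[
\sum_{(z,u)} |T_{z,u}| \cdot \Pr_{M \sim T_{z,u},\, a, b}\bigl[M + ab^\top \in T_{z,u}\bigr] \;\geq\; \tfrac{\delta}{2} \cdot \sum_{(z,u)} |T_{z,u}|.
\]
A weighted average then produces some $(z^*, u^*)$ with $T_{z^*, u^*}$ nonempty and non-expansion at least $\delta/2$.

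Finally, applying Hypothesis~\ref{conj:inv-short-code} with $\eta = \delta/2$ to $T_{z^*, u^*}$ yields an $r$-nice set $\cT$, where $r$ and the density $\epsilon := |T_{z^*, u^*} \cap \cT|/|\cT|$ depend only on $\delta$. Unpacking, $\cT = \{M : Mq_i = t_i,\, r_i^\top M = s_i \text{ for all } i \leq r\}$ for some constraints $(q_i, t_i)$ and $(r_i, s_i)$, and on $T_{z^*, u^*}$ the equation $F(M) = Mz^* + u^*$ holds identically; so with $z = z^*$ and $u = u^*$, these data deliver exactly the conclusion of Hypothesis~\ref{hypothesis:soundness-shortode}. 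The only non-routine step is the double-counting in the middle paragraph, but the invariance of the test's acceptance condition under affine shifts of $F$ keeps it elementary, with no analytic estimates required beyond the parity argument on $b^\top z$.
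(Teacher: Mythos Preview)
Your proof is correct and rests on the same mechanism as the paper's, but packaged more directly. The paper factors the argument through the \emph{unique} version of the shortcode test: a random shift $G(M)=F(M)+Mh$ converts 2-to-2 acceptance into unique acceptance with probability at least $\eta/2$ in expectation over $h$ (Lemma~\ref{lem:uniquifying}), and then the level sets $\{M:G(M)=u\}$ are fed to the Inverse Shortcode Hypothesis. Your sets $T_{z,u}=\{M:F(M)+Mz=u\}$ are exactly these level sets with $z$ playing the role of $h$, so your single double-count over all pairs $(z,u)$ merges the paper's two averaging steps (over $h$, then over the label $u$) into one. The paper's two-lemma structure has the expository advantage of isolating the unique-test reduction as a standalone statement; your route is slightly more streamlined and makes the decoded affine function $M\mapsto Mz^{*}+u^{*}$ explicit from the outset, whereas the paper recovers an equivalent affine function (constant on the nice set, built from one of its defining constraints) only at the end.
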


Second, we show that the soundness hypothesis for the shortcode consistency test implies the soundness hypothesis for the Grassmann consistency test. This reduces the DKKMS soundness hypothesis to establishing the expansion hypothesis for the Shortcode graph.

\begin{theorem} \label{cor:soundness-shortcode-vs-grassmann}
The degree 2 Shortcode Soundness Hypothesis implies the Grassmann Soundness Hypothesis \ref{hypothesis:soundness}.
\end{theorem}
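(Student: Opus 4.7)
The plan is to lift $F$ on the Grassmann graph to a companion map $\tilde F$ on the degree-$2$ shortcode graph, show that shortcode test acceptance transfers across this lift, apply the Shortcode Soundness Hypothesis~(\ref{hypothesis:soundness-shortode}) to $\tilde F$, and then translate the resulting shortcode structure back into the sandwich structure required by Hypothesis~\ref{hypothesis:soundness}. Concretely, from $F:\cV_\ell\to\Lin(\F_2^\ell)$ I would construct $\tilde F:\Mat_{\ell,n}\to\F_2^\ell$ by setting $\tilde F(M)_k := F(\mathrm{rowspan}(M))(M_k)$ whenever $M$ has full row rank (and arbitrarily otherwise). A key property of this $\tilde F$ is the $GL_\ell$-equivariance $\tilde F(PM)=P\tilde F(M)$, which is immediate from the linearity of each $F(V)$.

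First I would show that $\tilde F$ passes the shortcode consistency test with probability at least $\delta - o_n(1)$. For $n\gg\ell$, a uniformly random shortcode edge $(M_1, M_2 = M_1 + ab^\top)$ is, with probability $1 - o(1)$, ``generic'' in the sense that both $M_i$ have full row rank, $V_1:=\mathrm{rowspan}(M_1)\neq V_2:=\mathrm{rowspan}(M_2)$, and $W := V_1 \cap V_2$ has dimension $\ell - 1$; a short calculation (using that a non-zero $a\in\F_2^\ell$ determines a uniform $(\ell-1)$-dimensional subspace of $V_1$ via $\{M_1^\top c : c\cdot a = 0\}$, and that $b \bmod V_1$ determines a uniform Grassmann neighbor of $V_1$) shows that $(V_1, V_2)$ is a uniformly random Grassmann edge. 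A direct case analysis splitting the rows of $M_1,M_2$ by whether $a_k = 0$ or $a_k = 1$ then shows: if $F(V_1)|_W = F(V_2)|_W$, then rows with $a_k = 0$ satisfy $M_{1,k} = M_{2,k} \in W$ and contribute $0$ to $\tilde F(M_2)-\tilde F(M_1)$, while for rows with $a_k = 1$ linearity forces the difference to be a single bit $c = F(V_2)(M_{1,k_0} + b) - F(V_1)(M_{1,k_0})$ independent of $k$. Hence $\tilde F(M_2) - \tilde F(M_1) \in \{0, a\}$ whenever the Grassmann test accepts, giving the required transfer.

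Applying Hypothesis~\ref{hypothesis:soundness-shortode} to $\tilde F$ then produces $r=O_\delta(1)$, constraints $(q_i,t_i)_{i\leq r_2}$ and $(r_j,s_j)_{j\leq r_1}$ cutting out an $r$-nice set $\cT\subseteq\Mat_{\ell,n}$, and vectors $z\in\F_2^n, u\in\F_2^\ell$ with $\Pr_{M\in\cT}[\tilde F(M)=Mz+u]\geq\epsilon$. I would then set $W:=(\mathrm{span}\{q_i\})^\perp \subseteq \F_2^n$ and $Q:=\mathrm{span}\{s_j\}\subseteq\F_2^n$, so that the right-affine constraints push $V=\mathrm{rowspan}(M)$ into (a translate of) $W$ and the left-affine constraints force $Q\subseteq V$, matching the Grassmann sandwich $Q\subseteq V\subseteq W$. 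For the global linear function I would take $f(v) = z\cdot v + \nu(v)$, with $\nu:\F_2^n\to\F_2$ lying in $\mathrm{span}\{q_i\}$ and prescribed to absorb the shifts: the identity $\tilde F(M)_k = z\cdot M_k + u_k$ rewrites as $\nu(M_k) = u_k$, which (since the free parts of the rows are constrained to lie in $W$) forces $\nu|_W = 0$ and pins $\nu$ down on $\mathrm{span}\{q_i\}$ through equations of the form $\nu(v_k^*) = u_k$ at a chosen basepoint row. Once $f$ is fixed, the $GL_\ell$-equivariance makes the event ``$F(V)=f|_V$'' a $V$-level predicate, so the matrix-level density $\epsilon$ descends (using that every $V$ with $Q\subseteq V\subseteq W$ has the same number of basis representations in $\cT$) to a subspace-level density $\Omega(\epsilon)$ on the Grassmann sandwich, yielding Hypothesis~\ref{hypothesis:soundness}.

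The hard part will be this last translation in the presence of non-homogeneous shifts $u, t_i, s_j$: the shortcode conclusion is naturally basis-specific (involving rows of $M$), whereas the Grassmann conclusion must be basis-free, and one must verify that (a) $u$ and the $t_i$'s are consistent with a single global linear correction $\nu$, (b) the constraints do not over-determine $\nu$, and (c) the event ``$F(V)=f|_V$'' really depends only on $V$ and not on the witnessing basis $M \in \cT$. The $GL_\ell$-equivariance $\tilde F(PM) = P\tilde F(M)$ is the essential tool throughout: it is what allows $u$ and the $t_i$'s to be reconciled via a vector in $\mathrm{span}\{q_i\}$, and what makes the conclusion of shortcode soundness descend from the finer structure on matrices to the coarser sandwich structure on subspaces in a uniform way.
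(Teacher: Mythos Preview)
Your lift $\tilde F(M)_k=F(\mathrm{rowspan}\,M)(M_k)$ together with the $GL_\ell$-equivariance is a legitimate and in some ways more canonical construction than the paper's. The paper instead fixes a random basis $B$ of $\F_2^n$ and uses the \emph{bijection} $\phi_B:\cV_\ell(B)\to\Mat_{\ell,n-\ell}$ that sends $V$ to the last $n-\ell$ columns of its unique reduced-row-echelon basis. Your forward direction (transferring Grassmann acceptance to shortcode acceptance) is essentially the same calculation as the paper's, just phrased on $\Mat_{\ell,n}$ rather than $\Mat_{\ell,n-\ell}$, and is fine.

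The gap is in the back-translation. You need the correction $\nu\in\mathrm{span}\{q_i\}$ to satisfy $\nu(M_k)=u_k$ for rows of some $M\in\cT$; since $M_k\cdot q_i=(t_i)_k$, this forces $u=\sum_i c_i t_i$, i.e.\ $u\in\mathrm{span}\{t_1,\dots,t_{r_2}\}$. Nothing in Hypothesis~\ref{hypothesis:soundness-shortode} guarantees this relation between $u$ and the $t_i$, and your sketch does not supply an argument for it. Relatedly, when some $t_i\neq 0$ the right constraint $Mq_i=t_i$ forces $q_i\notin V^\perp$ rather than $V\subseteq q_i^\perp$, so your proposed $W=(\mathrm{span}\{q_i\})^\perp$ is not the correct zoom-out; the parenthetical ``translate of $W$'' cannot repair this since $V$ is a subspace. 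The $GL_\ell$-action replaces $(u,t_i)$ by $(Pu,Pt_i)$, so it preserves the obstruction $u\notin\mathrm{span}\{t_i\}$ and cannot be used to force the compatibility you need.

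The paper sidesteps exactly this issue by working with $\Mat_{\ell,n-\ell}$ and the canonical basis $v_k=(e_k,M_k)$. Then a right constraint $Mq=t$ (with $q\in\F_2^{n-\ell}$, $t\in\F_2^\ell$) is equivalent to $\langle v_k,(t,q)\rangle=0$ for all $k$, i.e.\ $V\subseteq (t,q)^\perp$, so the shift $t$ is absorbed into the normal vector of $W$. Likewise the affine decoder $Mz+u$ becomes the honest linear function $f(v)=\langle v,(u,z)\rangle$, with $u$ absorbed into $f$. In other words, rigidifying the first $\ell$ coordinates to the identity is precisely what makes the non-homogeneous shifts $u,t_i$ disappear for free; this is the piece your many-to-one rowspan approach lacks. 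Your route can likely be salvaged (for instance by arguing that for a $GL_\ell$-equivariant $\tilde F$ one may take $u\in\mathrm{span}\{t_i\}$ at the cost of slightly worse constants), but as written the compatibility step (a) is assumed rather than proved.
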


Finally, we relate the expansion hypothesis of the Grassmann graph to the expansion hypothesis for the degree 2 shortcode graph. 

\begin{theorem} \label{thm:expansion-equivalent}

The Grassmann Expansion Hypothesis (Hypothesis \ref{hypothesis:expansion}) is equivalent to the Inverse Shortcode Hypothesis (Hypothesis \ref{conj:inv-short-code}).
\end{theorem}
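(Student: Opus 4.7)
The plan is to prove each direction of the equivalence by exploiting the fact that the Grassmann graph $\grassmann(\ell,n)$ is (essentially) the quotient of the full-rank part of the shortcode graph $\cS_{\ell,n}$ by the left $\mathrm{GL}_\ell(\F_2)$-action. Concretely, let $\rho$ be the row-span map (defined on full-rank matrices) sending $M \in \Mat_{\ell,n}$ to its row span in $\cV_\ell$. I first verify the key edge correspondence: for a uniformly random shortcode edge $(M, M + ab^{\top})$ with $M$ of row span $V$, conditioning on the event (of probability $1 - o(1)$ when $n \gg \ell$) that $b \notin V$, the vector $a$ induces a nonzero linear functional on $V$, and $M + ab^{\top}$ is full rank, the projected endpoint $\rho(M + ab^{\top})$ is distributed uniformly over the Grassmann neighbors of $V$. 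A short counting argument then shows that the shortcode random walk, projected by $\rho$, coincides with the Grassmann random walk up to a negligible self-loop correction.

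For the direction Inverse Shortcode $\Rightarrow$ Grassmann Expansion, I start with $T \subseteq \cV_\ell$ having $\Phi_{\grassmann}(T) < \eta$ and form the lift $\tilde T := \rho^{-1}(T) \cap \{M \text{ full rank}\}$. The edge correspondence gives $\Pr_{M \sim \tilde T, a, b}[M + ab^{\top} \in \tilde T] \geq 1 - \eta - o(1)$, so Hypothesis~\ref{conj:inv-short-code} yields an $r$-nice set $\cal T = \{M : Mq_i = t_i,\ r_j^{\top}M = s_j^{\top}\}$ with $|\tilde T \cap \cal T|/|\cal T| \geq \delta$. I then interpret each constraint on full-rank $M \in \cal T$: the left constraints $r_j^{\top}M = s_j^{\top}$ (with $r_j \neq 0$) force $\sspan(s_j) \subseteq \rho(M)$, giving a zoom-in $Q := \sspan(s_j)$; the right constraints $Mq_i = t_i$ with $t_i = 0$ force $\rho(M) \subseteq q_i^{\perp}$, giving a zoom-out $W := \bigcap_{i:\, t_i=0} q_i^{\perp}$. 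Each right constraint with $t_i \neq 0$ only excludes the subfamily $\{V \subseteq q_i^{\perp}\}$, which has relative density $2^{-\Omega(\ell)}$ in $\{Q \subseteq V \subseteq W\}$ when $n \gg \ell$. Weighting each $V \in \{Q \subseteq V \subseteq W\}$ by the number of full-rank $M \in \cal T$ with $\rho(M) = V$ (this weight is uniform up to the above $o(1)$ loss by a standard count of $\phi$'s satisfying the left-constraint data), the density $\delta$ transfers to $\Pr_{Q \subseteq V \subseteq W}[V \in T] \geq \delta - o(1)$.

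For the opposite direction, given $S \subseteq \Mat_{\ell,n}$ with $\Pr[M + ab^{\top} \in S] \geq \eta$, I use the fact (verified by Fourier analysis on the additive group $\F_2^{\ell n}$: the character $\chi_N$ has eigenvalue $\lambda_N = \E_{a,b}[(-1)^{a^{\top}Nb}] = 2^{-\rank(N)} \geq 0$) that the shortcode walk is PSD. Consequently, restricting to the $\mathrm{GL}_\ell$-invariant subspace preserves the non-expansion bound. Define $w(V) := |S \cap \rho^{-1}(V)|/|\rho^{-1}(V)|$; this is the $\mathrm{GL}_\ell$-averaged indicator of $S$ and descends to a function on $\cV_\ell$. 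A standard level-set/convexity argument extracts a set $T = \{V : w(V) \geq \tau\}$ of Grassmann expansion $O(\eta)$ and nontrivial density. Applying Hypothesis~\ref{hypothesis:expansion} produces subspaces $Q \subseteq W$ with $\Pr_{Q \subseteq V \subseteq W}[V \in T] \geq \delta$. I then construct a shortcode-nice set $\cal T := \{M : Mw = 0\ \forall w \in W^{\perp},\ c_i^{\top}M = q_i^{\top}\ \forall i\}$ for a choice of $c_i \in \F_2^\ell$ (randomized over uniform linearly independent tuples using $\mathrm{GL}_\ell$-invariance), so that $|S \cap \cal T|/|\cal T| = \Omega(\delta)$ in expectation, yielding the Inverse Shortcode Hypothesis.

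The main obstacle I anticipate is the asymmetry highlighted above: a shortcode-nice set may include right constraints $Mq_i = t_i$ with $t_i \neq 0$, which are not $\mathrm{GL}_\ell$-invariant and do not correspond to any pure Grassmann structural constraint. Since $\tilde T$ is $\mathrm{GL}_\ell$-invariant while the hypothesis-provided $\cal T$ is not, reconciling these requires either the negligible-exclusion argument sketched above (valid when $n \gg \ell$) or an averaging argument over the $\mathrm{GL}_\ell$-conjugates $A\cal T$, each of which inherits the same density in $\tilde T$ by the invariance $|A \cal T \cap \tilde T| = |\cal T \cap \tilde T|$. A secondary but routine technical issue is handling the $O(2^{-(n-\ell)})$ fraction of non-full-rank matrices, which lie outside the $\rho$-picture and are absorbed into lower-order error terms.
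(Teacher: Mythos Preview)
Your approach via the row-span quotient $\rho:\Mat_{\ell,n}^{\mathrm{full\,rank}}\to\cV_\ell$ is genuinely different from the paper's. The paper instead fixes a basis $B$ and builds a \emph{bijection} $\phi_B:\cV_\ell(B)\to\Mat_{\ell,n-\ell}$ (a coordinate chart on the Grassmannian), checks directly that $\phi_B$ is a graph homomorphism, that it sends $r$-nice sets to $r$-nice sets, and that non-expansion is preserved up to a factor $2$ coming from the density of $\cV_\ell(B)$ (Lemmas~\ref{lem:preserve-expansion}, \ref{lem:preserve-expansion-converse}, \ref{lem:nice-sets-to-nice-sets}). Because $\phi_B$ is a bijection, both directions transfer sets to sets with no averaging, no level-set extraction, and no analysis of fibers.

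Your direction Inverse Shortcode $\Rightarrow$ Grassmann Expansion is workable, though the ``weights are uniform up to $o(1)$'' claim for the constraints $Mq_i=t_i$ with $t_i\neq 0$ and $r_j^{\top}M=s_j^{\top}$ needs more care than you indicate: over a fixed fiber $\rho^{-1}(V)\cong\mathrm{GL}_\ell$, these constraints cut out a coset whose size genuinely depends on the linear-independence pattern among the $s_j$ inside $V$ and on whether $q_i\in V^{\perp}$, so the density transfer requires a case analysis rather than a single $o(1)$ correction.

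The real gap is in the other direction. Your PSD argument is backwards. With $g=\Pi\,\mathbf 1_S$ the projection onto the $\mathrm{GL}_\ell$-invariant subspace and $P$ the shortcode walk, commutation and PSD-ness give
\[
\langle \mathbf 1_S,P\mathbf 1_S\rangle=\langle g,Pg\rangle+\langle \mathbf 1_S-g,\,P(\mathbf 1_S-g)\rangle\ \ge\ \langle g,Pg\rangle,
\]
i.e.\ projection can only \emph{decrease} the staying-probability numerator, while $\langle g,\mathbf 1\rangle=|S|$ is unchanged. So you get $\langle g,Pg\rangle/\lVert g\rVert_1\le\eta$, not $\ge\eta$. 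Equivalently, convexity of $x\mapsto\langle x,Px\rangle$ applied to $g=\E_{A}\mathbf 1_{AS}$ points the wrong way. Hence you have not produced any $[0,1]$-valued function on $\cV_\ell$ with a non-expansion lower bound, and the subsequent ``standard level-set/convexity'' extraction has nothing to act on (and would in any case typically cost a Cheeger-type square-root loss, which the statement does not allow). The paper sidesteps this entirely: given $S\subseteq\Mat_{\ell,n-\ell}$, the set $\phi_B^{-1}(S)\subseteq\cV_\ell(B)$ is already a genuine vertex set, and since $\phi_B$ preserves edges and at least half the Grassmann neighbors of any $V\in\cV_\ell(B)$ stay inside $\cV_\ell(B)$, its non-expansion is at least $\eta/2$ with no spectral or averaging step. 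If you want to rescue the quotient approach, you would need a direct combinatorial argument that some level set of $w(V)=|S\cap\rho^{-1}(V)|/|\rho^{-1}(V)|$ is non-expanding in $\grassmann(\ell,n)$; PSD-ness alone will not give it.
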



\subsection{Discussion}
\label{sec:discuss}

Working with the shortcode consistency test (and consequently, the shortcode expansion hypothesis) makes an approach to proving Hypothesis \ref{hypothesis:soundness} somewhat more tractable. This is because unlike the Grassmann graph, Degree 2 shortcode graph is a Cayley graph on $\F_2^{\ell n}$ (isomorphic to $\Mat_{\ell,n}$) under the group operation of $\F_2$-addition with the set of all rank 1 matrices forming the set of generators. Thus studying expansion of sets of vertices can be approached via powerful methods from Fourier analysis. Indeed, this is the route taken by the recent breakthrough  \cite{KMS} that proves the shortcode expansion hypothesis and completes the proof of the 2-to-2 games conjecture (with imperfect completeness).

Perhaps equally importantly, the shortcode consistency test suggests immediate extensions (\emph{higher degree shortcode graphs}) that provide a natural path to proving the Unique Games Conjecture. We discuss this approach here.

First, the Grassmann/shortcode consistency tests as stated above are ``2-to-2'' tests. That is, for any reply for the first query, there are two admissible replies for the other query. However, it is simple to modify the tests and make them \emph{unique} or ``1-to-1'' at the cost of making the completeness $1/2$ instead of $1$. For concreteness, we describe this simple modification below.  

\begin{boxedd}{Unique Degree 2 Shortcode Consistency Test}
  \begin{description}
    \item[Given:] a map $F$ from $\Mat_{\ell,n} \rightarrow \F_2^{\ell}$.
    \item[Operation:]\mbox{}
      \begin{enumerate}
\item Pick $M_1 \sim \Mat_{\ell,n}$ and a rank 1 matrix $ab^{\top}$ for vectors $a \in \F_2^{\ell}$, $b \in \F_2^{n}$ all uniformly at random from their respective domains. Let $M_2 = M_1 + ab^{\top}$.
\item Receive $F(M_1),F(M_2) \in \F_2^{\ell}$.
\item Accept if $F(M_2) = F(M_1)$.
      \end{enumerate}
    \end{description}
    \label{test:unique-shortcode}
\end{boxedd}
\begin{boxedd}{Unique Degree 3 Shortcode Consistency Test}
  \begin{description}
    \item[Given:] a map $F$ from $\Ten_{\ell, m,n} \rightarrow \F_2^{\ell}$.
    \item[Operation:]\mbox{}
      \begin{enumerate}
\item Pick $T_1 \sim \Ten_{\ell,m,n}$ and a rank 1 tensor $a \otimes b \otimes c$ for vectors $a \in \F_2^{\ell}$, $b \in \F_2^{m}$ and $c \in \F_2^{n}$ all uniformly at random from their respective domains. Let $T_2 = T_1 + a \otimes b \otimes c$.
\item Receive $F(T_1),F(T_2) \in \F_2^{\ell}$.
\item Accept if $F(T_2) = F(T_1)$.
      \end{enumerate}
    \end{description}
\end{boxedd}

It is easy to check that the any strategy that passes the 2-to-2 test can be modified to obtain a success probability of $1/2$ in passing the ``unique'' test above (see proof of Lemma \ref{lem:uniquifying} below). This is one of the several ways that the NP hardness of ``2-to-2'' games implies the NP hardness of $(1/2,\epsilon)$-unique games - that is, distingushing between instances where at least $1/2$ the constraints are satisfiable from those where at most $\epsilon$ fraction of constraints are satisfiable.

A natural strategy, thus, to try to show NP hardness of $(1-\epsilon,\epsilon)$-unique games is to use some variant of the shortcode consistency test above that has completeness $1-\epsilon$ instead of $1/2$. Indeed, the degree 2 shortcode consistency test suggests natural analogs with higher completeness - by moving to higher degree shortcode graphs. For concreteness, consider the following test on degree 3 shortcode graphs, where it is easy to argue a completeness of $3/4$.

Let $\Ten_{\ell, m, n}$ be the set of all $\ell \times m \times n$ tensors over $\F_2$. Recall that a rank 1 tensor is defined by 3 vectors $a \in \F_2^{\ell}$, $b \in \F_2^{m}$ and $c \in \F_2^{n}$ and can be written as $a \otimes b \otimes c$.

To see why there's a natural analog of the strategy in case of the degree 2 shortcode consistency test that gives a completeness of $3/4$, we show:

\begin{lemma}[Completeness]
Let $y\in \F_2^{m}$ and $z \in \F_2^{n}$. Let $F_f:\Ten_{\ell,m,n} \rightarrow \F_2^{\ell}$ be the map that assigns to any tensor $T$, the value $F(T)_i = \sum_{j,k} T(i,j,k) y_j z_k$. Then, $F_f$ passes the test with probability $3/4$.
\end{lemma}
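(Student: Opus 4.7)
The plan is to exploit the $\F_2$-linearity of $F_f$ in the input tensor $T$ together with the rank-one form of the perturbation $T_2 - T_1 = a \otimes b \otimes c$. A direct expansion gives, for every $i=1,\dots,\ell$,
\[
F(T_2)_i - F(T_1)_i \;=\; \sum_{j,k} a_i\, b_j\, c_k\, y_j\, z_k \;=\; a_i \cdot \Iprod{b,y} \cdot \Iprod{c,z},
\]
where all arithmetic is in $\F_2$. Hence the test accepts on the instance $(T_1, a\otimes b\otimes c)$ if and only if the $\F_2$-scalar $\Iprod{b,y}\cdot \Iprod{c,z}$ vanishes or $a=0$; in particular, $T_1$ itself does not enter the acceptance condition, so the conclusion does not depend on its distribution.

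Next I would average over the independent randomness of $(a,b,c)$. If either $y$ or $z$ equals $0$ the test trivially accepts with probability $1$, so I focus on the interesting case $y,z\neq 0$. Then, since $b$ and $c$ are independent and uniform on $\F_2^m$ and $\F_2^n$ respectively, the two bits $\Iprod{b,y}$ and $\Iprod{c,z}$ are independent uniform $\F_2$-bits, and so
\[
\Pr\bigl[\Iprod{b,y}\cdot \Iprod{c,z}=0\bigr] \;=\; 1 - \Pr[\Iprod{b,y}=1]\,\Pr[\Iprod{c,z}=1] \;=\; 1 - \tfrac14 \;=\; \tfrac34 \mper
\]
On the remaining $1/4$ event the test still accepts when $a=0$, an independent event of probability $2^{-\ell}$, yielding a total acceptance probability of $\tfrac34 + \tfrac14\cdot 2^{-\ell} \geq \tfrac34$, as claimed.

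There is no real obstacle: the argument is a one-line computation enabled by the multilinearity of $F_f$ and the rank-one increment structure. The conceptual content is the parallel with the unique degree-$2$ shortcode test, where the analogous computation yields a difference $a_i\cdot \Iprod{b,y}$ and hence acceptance probability $\tfrac12 + \tfrac12\cdot 2^{-\ell}$; each additional ``mode'' of the tensor inserts one more independent bit $\Iprod{\cdot,\cdot}$ that vanishes with probability $\tfrac12$, which is precisely what boosts the completeness from $\tfrac12$ to $\tfrac34$ in the degree-$3$ case (and, more generally, to $1-2^{-(d-1)}$ in the degree-$d$ case, which is what makes this family of tests a natural route toward $(1-\epsilon,\epsilon)$-hardness for unique games).
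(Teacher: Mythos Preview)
Your proof is correct and follows essentially the same approach as the paper: expand $F_f(T_2)-F_f(T_1)$ using linearity and the rank-one form to get $\iprod{b,y}\iprod{c,z}\,a$, then observe that the two inner products are independent unbiased $\F_2$-bits. You are in fact slightly more careful than the paper, which tacitly assumes $y,z\neq 0$ and ignores the $a=0$ event; your handling of these edge cases yields the (harmless) refinement that the acceptance probability is at least $3/4$ rather than exactly $3/4$.
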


\begin{proof}
Let $T,T'$ be such that $T-T'$ is rank 1 tensor. Then, $F_f$ passes the test only if $F_f(T-T') = 0$. If $T-T'= a \otimes b \otimes c$, then $F_f(T-T') = \langle b, y \rangle \cdot \langle c, z \rangle a$. Since $b,c$ are independently chosen in the test, the probability that $F_f(T-T') = 0$ is $3/4$.  
\end{proof}

Thus, the degree 3 shortcode consistency test gives a natural analog of the degree 2 shortcode consistency test with higher completeness. Indeed, degree r version gives a test with completeness of $1-2^{-r}$ as expected. One can also frame expansion hypotheses similar to the ones for the degree 2 case that posit a characterization of the non-expanding sets in higher degree shortcode graphs. 

While our current efforts to compose this test with the ``outer-PCP'' in order to get a reduction to Unique Games problem (with higher completeness) have not succeeded, it seems a natural avenue for launching an attack on the UGC.\footnote{There are indeed very serious obstacles that must be overcome before carrying this out. Specifically, the reduction of \cite{DBLP:journals/eccc/DinurKKMS16} uses a careful interplay between \textit{smoothness} properties of the outer PCP and \textit{efficiency} or ``blow up'' properties of the test (i.e., the number of potential queries by the verifier as a function of the number of honest strategies). The tensor based test has too much of a blowup to be able to be simply ``plugged in'' in the outer PCP used by~\cite{DBLP:journals/eccc/DinurKKMS16}.  }

\section{Small-Set-Expansion vs Soundness}
In this section, we establish that the inverse shortcode hypothesis (Hypothesis \ref{conj:inv-short-code}) implies the soundness of the degree 2 shortcode consistency test \ref{hypothesis:soundness-shortode}. 

\paragraph{From 2-to-2 to Unique Tests}
For the sake of exposition, it will be easier to work with Test \ref{test:unique-shortcode}, the ``unique'' version of the degree 2 shortcode consistency test. Thus, we restate the soundness hypothesis for Test \ref{test:unique-shortcode} and show that it is enough to establish Hypothesis  \ref{hypothesis:soundness-shortode}.
\begin{hypothesis}[Soundness of Test \ref{test:unique-shortcode}] \label{hypothesis:soundness-unique}
For every $\eta > 0$, there exists $\delta > 0$, and an integer $r > 0$ such that following holds for sufficiently large $n \gg \ell.$

Let $F:\Mat_{\ell,n} \rightarrow \F_2^{\ell}$ such that $\Pr_{M \sim \Mat_{\ell,n}, a \sim \F_2^{\ell}, b \sim \F_2^{n}} [F(M + ab^{\top}) =F(M)] \geq \eta.$ 
Then, there exists linear constraints $(q_i,t_i)$ and $(r_i,s_i)$ for $i \leq r$ and a $z \in \F_2^{n}, u \in \F_2^{\ell}$ such that 
\[
\Pr_{M \sim \Mat_{\ell,n}} [ F(M) = Mz + u \mid Mq_i = t_i, r_i^{\top}M = s_i \text { } \forall i \leq r] \geq \delta.
\]
\end{hypothesis}

We first show that Hypothesis \ref{hypothesis:soundness-unique} implies Hypothesis \ref{hypothesis:soundness-shortode}.
\begin{lemma} \label{lem:uniquifying}
Hypothesis \ref{hypothesis:soundness-unique} implies Hypothesis \ref{hypothesis:soundness-shortode}.
\end{lemma}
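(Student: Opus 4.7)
The plan is to use a standard ``symmetrization'' trick that converts a strategy for the 2-to-2 test into a strategy for the unique test at the cost of a factor of $2$ in the acceptance probability, and which commutes with the ``linear $+$ constant'' structure appearing in the conclusion of Hypothesis~\ref{hypothesis:soundness-shortode}.

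Concretely, suppose $F\colon \Mat_{\ell,n}\to\F_2^\ell$ passes the 2-to-2 test with probability at least $\eta$. For each $y\in\F_2^n$, define the perturbed assignment $F_y(M) \defeq F(M)+My$, where $My\in\F_2^\ell$ denotes the usual matrix-vector product. The map $M\mapsto My$ is linear in $M$, so for every rank-$1$ perturbation $ab^\top$ we have
\[
F_y(M+ab^\top)-F_y(M)\;=\;\bigl[F(M+ab^\top)-F(M)\bigr]\;+\;\iprod{b,y}\cdot a\mper
\]
If the 2-to-2 test accepts on $(M,a,b)$, then the bracket lies in $\{0,a\}$, and in both sub-cases the value of $\iprod{b,y}\in\F_2$ (uniform over a random $y$) makes $F_y$ satisfy the \emph{unique} test with probability at least $\tfrac12$: in the $0$ sub-case pick $\iprod{b,y}=0$, and in the $a$ sub-case pick $\iprod{b,y}=1$ (the edge cases $a=0$ or $b=0$ only help).

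Averaging over $y$ and $(M,a,b)$ jointly shows that some fixed $y^\ast$ makes $F_{y^\ast}$ pass the unique test with probability at least $\eta/2$. I would now apply Hypothesis~\ref{hypothesis:soundness-unique} to $F_{y^\ast}$ with parameter $\eta/2$, obtaining linear constraints $(q_i,t_i),(r_i,s_i)$ for $i\le r$ and vectors $z\in\F_2^n,\,u\in\F_2^\ell$ such that
\[
\Pr_{M}\bigl[F_{y^\ast}(M)=Mz+u \,\big|\, Mq_i=t_i,\;r_i^\top M=s_i\ \forall i\bigr]\;\ge\;\delta\mper
\]
Finally, translate back: since $F(M)=F_{y^\ast}(M)+My^\ast$, the event $F_{y^\ast}(M)=Mz+u$ is the same as the event $F(M)=M(z+y^\ast)+u$. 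Setting $z'\defeq z+y^\ast$ and $u'\defeq u$ yields exactly the conclusion of Hypothesis~\ref{hypothesis:soundness-shortode} with parameters $(\eta,\delta,r)$ where $\delta$ is the parameter produced by Hypothesis~\ref{hypothesis:soundness-unique} on input $\eta/2$. There is no real obstacle here; the only thing to verify carefully is that the same affine subspace of matrices (defined by the unchanged constraints $(q_i,t_i),(r_i,s_i)$) works for $F$ and $F_{y^\ast}$ simultaneously, which is immediate because the additive correction $My^\ast$ is itself of the ``$Mz$'' form and so gets absorbed into the $z$ part of the conclusion.
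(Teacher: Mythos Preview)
Your proposal is correct and follows essentially the same approach as the paper: both arguments perturb $F$ to $F_y(M)=F(M)+My$ for a random $y\in\F_2^n$, observe that conditional on the 2-to-2 test accepting the unique test accepts with probability at least $\tfrac12$ over $y$, and then fix a good $y^\ast$ by averaging. Your write-up is in fact slightly more complete than the paper's, since you explicitly carry out the translation back from the conclusion for $F_{y^\ast}$ to the conclusion for $F$ by absorbing $My^\ast$ into the $Mz$ term.
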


\begin{proof}
Let $F$ be the labeling strategy for Test \ref{test:deg-2-shortcode}. We will first obtain a good labeling strategy for Test \ref{test:unique-shortcode} by modifying $F$ slightly. 

Choose $h$ uniformly at random from $\F_2^{n}$. For any $M \in \Mat_{\ell,n}$, let $G(M) = F(M) + Mh$. 
We claim that if $F$ passes the Test \ref{test:deg-2-shortcode} with probability $\eta$, then $G$ passes Test \ref{test:unique-shortcode} with probability at least $\eta/2$.

To see this, take any $M,M'$ such that $M \sim M'$ in $\cS_{\ell,n}$. That is, $M-M' = ab^{\top}$ for vectors $a,b$. We will argue that $G(M) = G(M')$ with probability $1/2$. This will imply that in expectation over the choice of $h$, $G$ satisfies at least $1/2$ the constraints satisfied by $F$ in Test \ref{test:deg-2-shortcode} completing the proof.

This is simple to see: since $F$ passes the test, $F(M) = F(M')$ or $F(M) -F(M') = a$. WLOG, say the first happens. Observe that $G$ passes the unique test on $M,M'$ if $F(M) + Mh = F(M') + M'h$ or $F(M) -F(M') = (M-M')h = \langle b,h \rangle a$. Since $F(M) = F(M')$, $G$ thus passes if $\langle b, h \rangle = 0$ which happens with probability $1/2$. 





\end{proof}

\paragraph{Expansion to Soundness} We will now show that Hypothesis \ref{conj:inv-short-code} implies Hypothesis \ref{hypothesis:soundness-unique}. This completes the proof of Theorem \ref{thm:soundness=expansion}. A similar argument can be used to directly establish that Hypothesis \ref{hypothesis:expansion} implies Hypothesis \ref{hypothesis:soundness}. We do not include it here explicitly. Instead, we relate the expansion and soundness hypothesis for the degree 2 shortcode test to the analogs for the Grassmann test as we believe this could shed light on showing expansion hypotheses for higher degree shortcode tests discussed in the next section.

\begin{lemma}
Hypothesis \ref{conj:inv-short-code} implies Hypothesis \ref{hypothesis:soundness-unique}
\end{lemma}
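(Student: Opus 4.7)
The plan is to encode $F$ as its ``graph set'' $\tilde S = \{[M \mid F(M)] : M \in \Mat_{\ell, n}\} \subseteq \Mat_{\ell, n+1}$, apply Hypothesis~\ref{conj:inv-short-code} to $\tilde S$ inside the degree-$2$ shortcode graph $\cS_{\ell, n+1}$, and then read the conclusion of Hypothesis~\ref{hypothesis:soundness-unique} off the structure of the resulting nice set. First, I bound the non-expansion of $\tilde S$. A uniformly random neighbor of $[M \mid F(M)]$ in $\cS_{\ell, n+1}$ has the form $[M + ab^\top \mid F(M) + \beta a]$, obtained by adding the rank-one matrix whose first $n$ columns are $ab^\top$ and whose last column is $\beta a$, for uniform $a \in \F_2^{\ell}$, $b \in \F_2^{n}$, $\beta \in \F_2$. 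This neighbor lies in $\tilde S$ precisely when $F(M + ab^\top) = F(M) + \beta a$; conditioning on $\beta = 0$ (probability $1/2$), the hypothesis on $F$ guarantees that this event has probability at least $\eta/2$, so $\tilde S$ is $(\eta/2)$-non-expanding in $\cS_{\ell, n+1}$.

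Applying Hypothesis~\ref{conj:inv-short-code} at parameter $\eta/2$ produces constants $\delta, r$ depending only on $\eta$ and an $r$-nice set $\mathcal T \subseteq \Mat_{\ell, n+1}$ with $|\tilde S \cap \mathcal T| \ge \delta |\mathcal T|$. Next, I decompose the constraints cutting out $\mathcal T$. Writing each right constraint of $\mathcal T$ as $N q_i = t_i$ with $q_i = (\bar q_i, \alpha_i) \in \F_2^n \times \F_2$, and each left constraint as $r_j^\top N = s_j^\top$ with $s_j = (\bar s_j, \sigma_j) \in \F_2^n \times \F_2$, the decomposition $N = [M \mid c]$ turns each right constraint into $M \bar q_i + \alpha_i c = t_i$ and each left constraint into the pair $r_j^\top M = \bar s_j^\top$ and $r_j^\top c = \sigma_j$. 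The argument then splits on whether some $\alpha_i$ equals $1$.

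In the main case, I pick $i_*$ with $\alpha_{i_*} = 1$ and set $z = -\bar q_{i_*} \in \F_2^n$, $u = t_{i_*} \in \F_2^{\ell}$, so that every $[M \mid c] \in \mathcal T$ satisfies $c = Mz + u$. The other $\alpha_i = 1$ constraints become right constraints $M(\bar q_i - \bar q_{i_*}) = t_i - t_{i_*}$ on $M$; together with the $\alpha_i = 0$ right constraints $M \bar q_i = t_i$ and the left constraints $r_j^\top M = \bar s_j^\top$, these carve out an $O(r)$-nice affine subspace $\mathcal T_M' \subseteq \Mat_{\ell, n}$. Since $c$ is determined by $M$, one has $\mathcal T = \{[M \mid Mz + u] : M \in \mathcal T_M'\}$ and $|\mathcal T| = |\mathcal T_M'|$, so $\Pr_{M \in \mathcal T_M'}[F(M) = Mz + u] = |\tilde S \cap \mathcal T|/|\mathcal T| \ge \delta$, which is the desired conclusion. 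In the degenerate case where every $\alpha_i = 0$, the column $c$ is constrained only by $r_j^\top c = \sigma_j$, carving out an affine subspace $\mathcal C \subseteq \F_2^{\ell}$ independent of $M$; then $\mathcal T = \mathcal T_M' \times \mathcal C$, and the density bound reads $\sum_{c \in \mathcal C} |\{M \in \mathcal T_M' : F(M) = c\}| \ge \delta |\mathcal T_M'|\,|\mathcal C|$, so averaging over $c \in \mathcal C$ yields some $c_0 \in \mathcal C$ with $\Pr_{M \in \mathcal T_M'}[F(M) = c_0] \ge \delta$, matching the conclusion with $z = 0$ and $u = c_0$.

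The main technical obstacle I anticipate is that, in the main case, combining $c = Mz + u$ with each left constraint $r_j^\top c = \sigma_j$ naively yields a \emph{bilinear} condition $r_j^\top M z = \sigma_j - r_j^\top u$ on $M$, which is neither a left nor a right constraint and would therefore fall outside the form of nice set allowed by the conclusion of Hypothesis~\ref{hypothesis:soundness-unique}. The resolution is that the same $r_j$ already appears as the pure left constraint $r_j^\top M = \bar s_j^\top$ on $M$, so on $\mathcal T_M'$ the quantity $r_j^\top M z$ is identically equal to $\bar s_j^\top z$. The would-be bilinear constraint thereby collapses to the scalar identity $\sigma_j = \bar s_j^\top z + r_j^\top u$ in the fixed data alone, which is automatic because $\mathcal T$ is nonempty (as $|\tilde S \cap \mathcal T| \ge \delta |\mathcal T| > 0$). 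Verifying this collapse carefully, and checking that the number of derived right constraints on $M$ stays bounded by a function of $r$ alone, completes the plan.
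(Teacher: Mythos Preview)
Your argument is correct, but it takes a different and more elaborate route than the paper. The paper's proof works entirely inside $\Mat_{\ell,n}$: it partitions the domain into level sets $S_z=\{M:F(M)=z\}$, observes that the success probability of the unique test is a weighted average of the non-expansion probabilities of the $S_z$, picks a single $z$ with $\Pr_{M\sim S_z,a,b}[M+ab^\top\in S_z]\ge\eta$, applies Hypothesis~\ref{conj:inv-short-code} to that $S_z$, and concludes with the \emph{constant} affine strategy $u=z$, $z=0$ (or equivalently any affine map that is identically $z$ on the nice set $Q$). Your approach instead encodes $F$ as its graph in $\Mat_{\ell,n+1}$, applies the hypothesis once in the larger space, and then decodes the extra column of the resulting nice set into the pair $(z,u)$. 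Both are valid; the paper's argument is shorter and avoids the case analysis and the bilinear-constraint collapse you had to work through, while your construction has the minor aesthetic advantage that it can produce a genuinely nonconstant decoded affine function $M\mapsto Mz+u$ in one shot rather than always outputting a constant on the nice set. For the purposes of establishing Hypothesis~\ref{hypothesis:soundness-unique} the two are equivalent, since the conclusion permits $z=0$.
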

\begin{proof}
Let $F$ be the labeling function as in the assumption in Hypothesis \ref{hypothesis:soundness-unique}. Then, we know that $\Pr_{M \sim \Mat_{\ell,n}, a \sim \F_2^{\ell}, b \sim \F_2^{n}} [ F(M) = F(M+ab^{\top})] \geq \eta.$ For any $z \in \{0,1\}^{\ell}$, let $S_z$ be the set of all matrices $M$ with $F(M) = z$. Then, by an averaging argument, there must be a $z \in \{0,1\}^{\ell}$ such that $\Pr_{M \sim S_z, a \sim \F_2^{\ell}, b \sim \F_2^{n}} [ M+ab^{\top} \in S_z] \geq \eta.$

Apply Hypothesis \ref{conj:inv-short-code} to $S_z$ to obtain $r$-nice subset $Q$ of $\Mat_{\ell,n}$ such that $|Q \cap S_z| \geq \delta |Q|$. Let $Mq = t$ be a affine constraint satisfied by every $M \in Q$. Consider the affine linear strategy $H:\Mat_{\ell,n} \rightarrow \F_2^{\ell}$ that maps any $M$ to $H(M) = Mq + t + z$. Observe that for every $M \in Q$, $H(M) = z$ by this choice. As a result, when $M \sim M'$ are such that $M,M' \subseteq Q$, $\Pr[ H(M) = H(M')] \geq \delta$. Thus, $H$ is the ``decoded'' strategy that satisfies the requirements of Hypothesis \ref{hypothesis:soundness-unique} as required. This completes the proof.



\end{proof}

\section{Relating Grassmann Graphs to Degree 2 Shortcode Graphs}
In this section, we show a formal relationship between the Grassmann and the degree Shortcode tests. In particular, we will prove Theorems \ref{cor:soundness-shortcode-vs-grassmann} and \ref{thm:expansion-equivalent}.

\subsection{A homomorphism from $\grassmann(\ell,n)$ into $\cS_{\ell,n}$}
Key to the relationship between the two tests is an embedding of the degree 2 shortcode graph $\cS_{\ell,n}$ into $\Mat_{\ell,n-\ell}$. We describe this embedding first. As justified in the previous section, it is without loss of generality to work with the ``unique'' versions of both the tests.

To describe the above embedding, we need the notion of \emph{projection} of a subspace of $\F_2^{n}$ to a set of coordinates. 

\begin{definition}[Projection of a Subspace]
Given a subspace $V \subseteq \F_2^n$, the projection of $V$ to a set of coordinates $S \subseteq [n]$, written as $\Proj_S(V)$ is the subspace of $\F_2^{|S|}$ defined by taking the vectors obtained by keeping only the coordinates indexed by $S$ for every vector $v \in V.$
\end{definition}

Let $\bB \subseteq \F_2^n$ be the set $n$-tuples of linearly independent elements of $\F_2^n$, i.e. each $B \in \bB $ forms a basis for the vector space $\F_2^n$. We will use $B_0$ to denote the standard basis $\{e_1, e_2, \ldots, e_n\}.$ 

We will now describe a class of graph homomorphisms from $\grassmann(\ell,n)$ into $\cS_{\ell,n-\ell}$. Each element of this class can be described by a basis $B$ of $\F_2^n$.

For each basis $B \in \bB$, let $\cV_{\ell}(B) \subseteq \cV_{\ell}$ be the set of all subspaces $V \in \cV_{\ell}$ such that the projection of $V$ to the first $\ell$ coordinates when written w.r.t. the basis $B$ is full-dimensional. Our embedding will map each element of $\cV_{\ell}(B)$ into a distinct element of $\Mat_{\ell,n}$ such that the edge structure within $\cV_{\ell}(B)$ in $\grassmann (\ell,n)$ is preserved under this embedding.

\begin{definition}[Homomorphism from $\grassmann(\ell,n)$ into $\cS_{\ell,n}$ ]
Let $\phi = \phi_B: \mathcal{V}_{\ell}(B)\rightarrow Mat_{\ell,n-\ell}$ be defined as follows. Write every vector in the $B$-basis.
For any $V \in \mathcal{V}_{\ell}(B)$ and for $1 \leq i \leq \ell$, let $v_i$ be the unique vector in $V$ such that $Proj_{[\ell]}(v_i) = e_i \in \F_2^{\ell}$. We call $v_1, v_2, \ldots,v_{\ell}$ to be the \emph{canonical basis} for $V$.

Define $\phi(V)$ to be the $\ell \times (n-\ell)$ matrix with the $i^{th}$ row given by the projection of $v_i$ on the last $(n-\ell)$ coordinates for each $1 \leq i \leq \ell$.  When the basis $B$ is clear from the context,  we will omit the subscript and write $\phi$.
\end{definition}
It is easy to confirm that $\phi$ is a bijection from $\cV_{\ell}(B)$ into $\Mat_{\ell,n}$. This is because canonical basis for a subspace $V$ is unique.

Next, we prove some important properties of the homomorphism $\phi$ that will be useful in the proof of Theorem \ref{cor:soundness-shortcode-vs-grassmann}. 

First, we show that the map $\phi$ is indeed a homomorphism as promised and thus, preserves edge structure.

\begin{lemma}[$\phi$ is a homomorphism]
For $\phi = \phi_B$ defined above and any $V,V' \in \cV_{\ell}(B)$, $V \sim V'$ in $\grassmann(\ell,n)$ iff $\phi(V) \sim \phi(V')$ in $\cS_{\ell,n}$.
\end{lemma}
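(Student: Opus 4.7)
The plan is to reduce the geometric adjacency in $\grassmann(\ell,n)$ to a rank-one condition on the difference $\phi(V)-\phi(V')$ by exploiting the isomorphism between a subspace $V\in\cV_\ell(B)$ and $\F_2^\ell$ given by projection onto the first $\ell$ coordinates (in basis $B$).

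First I would set up notation making the map $\phi$ easier to manipulate algebraically. For $V\in\cV_\ell(B)$, let $T_V\colon V\to\F_2^\ell$ be the projection onto the first $\ell$ coordinates (written in basis $B$); the hypothesis $V\in\cV_\ell(B)$ is exactly the statement that $T_V$ is an isomorphism. Then the canonical basis is $v_i=T_V^{-1}(e_i)$, and the $i$-th row of $\phi(V)$ is the projection of $v_i$ onto the last $n-\ell$ coordinates. Extending by linearity, for any $a\in\F_2^\ell$ the vector $T_V^{-1}(a)\in V$ has its last $n-\ell$ coordinates given by the row vector $a^{\top}\phi(V)$. This reformulation is the only real computational input.

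Next I would prove the forward direction. Assume $V\sim V'$ in $\grassmann(\ell,n)$, i.e.\ $\dim(V\cap V')=\ell-1$. Since $T_V$ and $T_{V'}$ are injective on $V$ and $V'$ respectively, the image $W:=T_V(V\cap V')=T_{V'}(V\cap V')$ is an $(\ell-1)$-dimensional subspace of $\F_2^\ell$. For every $a\in W$, the vectors $T_V^{-1}(a)$ and $T_{V'}^{-1}(a)$ are both the unique element of $V\cap V'$ projecting to $a$, so they coincide. Using the formula above, this yields $a^{\top}\phi(V)=a^{\top}\phi(V')$ for all $a\in W$, so $W$ is contained in the left kernel of $\phi(V)-\phi(V')$. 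Hence $\phi(V)-\phi(V')$ has rank at most $1$; since $V\ne V'$ and $\phi$ is a bijection from $\cV_\ell(B)$ to $\Mat_{\ell,n-\ell}$, the rank is exactly $1$, meaning $\phi(V)\sim\phi(V')$ in $\cS_{\ell,n-\ell}$.

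For the converse, assume $\phi(V)-\phi(V')$ has rank exactly $1$ over $\F_2$. Its left kernel is then an $(\ell-1)$-dimensional subspace $W\subseteq\F_2^\ell$. For each $a\in W$, $a^{\top}\phi(V)=a^{\top}\phi(V')$, so $T_V^{-1}(a)$ and $T_{V'}^{-1}(a)$ agree both on the first $\ell$ coordinates (they both project to $a$) and on the last $n-\ell$ coordinates, hence are equal, producing a common vector in $V\cap V'$. As $a$ ranges over $W$ and $T_V$ is injective, these vectors span an $(\ell-1)$-dimensional subspace of $V\cap V'$; combined with $V\ne V'$ (forced by $\phi(V)\ne\phi(V')$), we get $\dim(V\cap V')=\ell-1$, i.e.\ $V\sim V'$ in $\grassmann(\ell,n)$. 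No step is really the ``main obstacle''; the only point requiring care is to verify that the projection identification gives the clean formula $T_V^{-1}(a)\leftrightarrow a^{\top}\phi(V)$ so that rank-one over $\F_2$ translates to an $(\ell-1)$-dimensional left kernel in the expected way.
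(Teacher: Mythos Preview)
Your proof is correct and follows essentially the same idea as the paper's: both arguments identify, via the canonical basis, the $(\ell-1)$-dimensional intersection $V\cap V'$ with the left kernel of $\phi(V)-\phi(V')$, so that $\dim(V\cap V')=\ell-1$ iff the difference has rank~$1$. Your presentation is somewhat cleaner --- by packaging the canonical basis as the inverse of the projection isomorphism $T_V$ and using the formula $T_V^{-1}(a)=(a,\,a^{\top}\phi(V))$, you make the two directions completely symmetric, whereas the paper argues each direction separately with explicit row manipulations --- but the underlying content is the same.
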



\begin{proof}
Let $u \in GF(2)^{\ell}, v \in GF(2)^{n-\ell}$ be arbitrary non-zero vectors that define a rank 1 matrix $uv^{\top}$. Consider the matrix $M =M_V + uv^{\top}$. Then, $M \in Mat_{\ell, n-\ell}$ and thus $\phi^{-1}(M) = W \in \mathcal{V}_{\ell}(B)$. We claim that $dim(W \cap V) = \ell-1.$ Suppose $b_1, b_2, \ldots, b_{\ell}$ are the rows of $M_V$. Then, the rows of $M$ are given by $b_i + u_i v$. Thus, $W$ is spanned by $(e_i, b_i + u_i v)$ where $e_i$ is the $i^{th}$ standard basis element on the first $\ell$ coordinates and the notation $(e_i, b_i + u_i v)$ indicates the concatenation of the vectors in the ordered pair to get a $n$ dimensional vector. In particular, every element of $W$ can be written as $\sum_{i \leq \ell} \lambda_i (e_i, b_i) + (\sum_{i \leq \ell} \lambda_i u_i) v$ and any such vector is contained in $V$ if $(\sum_{i \leq \ell} \lambda_i u_i)$ implying that $dim(V \cap W) = dim(V)-1 = \ell-1.$

On the other hand, let $V'$ be a subspace in $\mathcal{V}_{\ell}(B)$ such that $V' \sim V$ and let $M_{V}$ and $ M_{V'}$ be the matrices obtained via the map $\phi$. Then, $M_{V}$ and $M_{V'}$ must differ in at least one row, say, WLOG, the last row of $M_{V}$ and $M_{V'}$ are $(e_{\ell},v)$ and $(e_{\ell},v')$ respectively. Notice that since the vector with $e_{\ell}$ in the first $\ell$ coordinates is unique in $V,V'$, neither of $(e_{\ell},v), (e_{\ell},v')$ belong to the intersection $V \cap V'$. Further, for every vector $z \in V$, either $z$ or $z + (e_{\ell}, v)$ must be contained in the intersection $V \cap V'$ (as the extra linear equation that $V \cap V'$ satisfies over and above $V$ is satisfied by exactly one of $z$ and $z+(e_{\ell},v)$. Thus, by letting $b'_i =  b_i + (\ell, v) + (\ell, v')$ to every one of the canonical basis elements $b_i$ of $V$ that are not in $V \cap V'$, we get a set of elements that are all 1) contained in $V'$ 2) $Proj_{[\ell]} b'_i = e_i$ for every $i$. This then has to be the canonical basis of $M_{V'}$ (by uniqueness of the canonical basis) and further, the corresponding $M_{V'}$ can be written as $1_S (w+w')^{\top}$ where $S$ is the set of $i$ such that $b_i$ is not in $V \cap V'$.
\end{proof}

Next, we want to argue that expansion of sets is preserved up to constant factors under the map $\phi$. Towards this, we first show that $\cV_{\ell}(B)$ contains a fraction of the vertices of $\grassmann(\ell,n)$ as we next show. 

\begin{lemma}[Projections of Subspaces] 
Let $V \sim \cV_{\ell}$ for $\ell \leq \sqrt{n}/2.$ Then, $\Pr[ \dim \Proj_{[\ell]}(V) = \ell] \geq 0.288$ for large enough $n$ and $\ell = \omega(1).$ 

Further, let $V \in \cV_{\ell}(B)$ for some $B$. Then, at least $1/2$ fraction of the neighbors of $V$ in $\grassmann(\ell,n)$ are contained in $\cV_{\ell}(B)$.  \label{lem:proj-prob}
\end{lemma}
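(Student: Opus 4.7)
The plan is to handle the two assertions separately; each reduces to an elementary $\F_2$-counting computation.

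For the first assertion, I would sample $V$ as the row-span of a uniformly random $\ell \times n$ matrix $M$ conditioned on having full row rank (a standard way to produce a uniform element of $\cV_\ell$). The event $\dim \Proj_{[\ell]}(V) = \ell$ is then equivalent to the left $\ell \times \ell$ block of $M$ being invertible, and this latter event already implies that $M$ has full row rank. Bayes' rule therefore gives
\[
\Pr[\dim \Proj_{[\ell]}(V) = \ell] \;=\; \frac{\prod_{k=1}^\ell(1 - 2^{-k})}{\prod_{k=n-\ell+1}^n(1 - 2^{-k})} \;\ge\; \prod_{k=1}^\ell(1 - 2^{-k}) \;\ge\; \prod_{k=1}^\infty(1 - 2^{-k}) \approx 0.2887881,
\]
where the first inequality uses that the denominator is at most $1$, and the second that the numerator is a partial product (in $\ell$) of the full infinite Euler product; the latter quantity strictly exceeds $0.288$.

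For the second assertion, I would count neighbors on both sides of the homomorphism $\phi$. A neighbor $V'$ of $V$ in $\grassmann(\ell,n)$ is determined uniquely by the $(\ell-1)$-dimensional subspace $U := V \cap V'$ of $V$ together with a $1$-dimensional subspace $L := V'/U$ of $\F_2^n/U$ different from $V/U$. The total count is therefore
\[
(2^\ell-1)\bigl(2^{n-\ell+1}-2\bigr) \;=\; 2\,(2^\ell-1)(2^{n-\ell}-1).
\]
By the preceding homomorphism lemma, those neighbors that additionally lie in $\cV_\ell(B)$ are in bijection with neighbors of $\phi(V) \in \Mat_{\ell,n-\ell}$ in $\cS_{\ell,n-\ell}$, which are parametrized by the nonzero rank-one matrices $uv^\top$ with $u \in \F_2^\ell\setminus\{0\}$ and $v \in \F_2^{n-\ell}\setminus\{0\}$. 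Since over $\F_2$ such a decomposition is unique (no nontrivial scalar rescalings exist), there are exactly $(2^\ell-1)(2^{n-\ell}-1)$ such neighbors, giving ratio exactly $1/2$.

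Neither step poses a real obstacle; the only point calling for some care is recognizing that the Euler product $\prod_{k \ge 1}(1-2^{-k})$ indeed exceeds $0.288$, and the remainder is bookkeeping. The hypotheses $\ell \le \sqrt{n}/2$ and $\ell = \omega(1)$ appear only to control the denominator in the first part (which is handled above by the crude bound $\le 1$) and are likely imposed for use elsewhere in the paper.
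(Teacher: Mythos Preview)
Your argument is correct on both parts. For the first assertion you and the paper take the same route---model $V$ as the row span of a uniform $\ell\times n$ matrix and compute the chance the left $\ell\times\ell$ block is nonsingular---though your Bayes' rule computation is cleaner than the paper's union bound (and shows, incidentally, that the hypothesis $\ell\le\sqrt n/2$ is not needed for this inequality). For the second assertion your approach genuinely differs: the paper samples a random neighbor by picking a random ordered basis $v_1,\dots,v_\ell$ of $V$ and replacing $v_\ell$ by a uniform $v_\ell'\notin V$, then computes directly that $\Pr[\pi(v_\ell')\notin\operatorname{span}(\pi(v_1),\dots,\pi(v_{\ell-1}))]=\tfrac12$; you instead count the total number of Grassmann neighbors as $2(2^\ell-1)(2^{n-\ell}-1)$ and invoke the already-proven homomorphism lemma to count those lying in $\cV_\ell(B)$ as the $(2^\ell-1)(2^{n-\ell}-1)$ nonzero rank-one matrices. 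Your counting argument is arguably tidier and makes the exact ratio $\tfrac12$ immediate, at the cost of relying on the homomorphism lemma (which in the paper is indeed established just before this one, so the dependency is legitimate); the paper's sampling argument is self-contained.
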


\begin{proof}
We can sample a random subspace of $\ell$ dimension as follows: Choose $\ell$ uniformly random and independent points from $GF(2)^n$. If they are linearly independent, let $V$ be the subspace spanned by them. 

We can estimate the probability that the sampled points are linearly independent as: $\Pi_{i = 0}^{\ell-1} (1-2^{-n+i}) \geq 1-2^{-n} 2^{\ell^2}.$ 

Next, we estimate the probability that the projection to first $\ell$ coordinates of the sampled vectors is linearly independent. By a similar reasoning as above, this probability is at least $\Pi_{i = 0}^{\ell-1} (1-2^{-\ell+i}) \approx 0.289$ (the limit of this product for large $\ell$.)

By a union bound, thus, a random subspace has a full dimensional projection on $S$ with probability at least $0.289-2^{-n/2}$ for any $\ell < \sqrt{n}/2.$ 

For the remaining part, assume that $B = B_0$ - the standard basis. Notice that a random neighbor of $V$ can be sampled as follows:  choose a uniformly random basis for $V$, say $v_1, v_2, \ldots, v_{\ell}$. Replace $v_{\ell}$ by a uniformly random vector $v_{\ell}'$outside of $V$ in $\F_2^n$. Since $V \in \cV_{\ell}(B)$, the projection of $V$ to the first $\ell$ coordinates is linearly independent. $V'$ would thus satisfy the same property whenever $v_{\ell'}$ is such that the projection of $v_{\ell}'$ to the first $\ell$ coordinates is not in the span of the projection to the first $\ell$ coordinates of $v_1, v_2, \ldots, v_{\ell-1}$. The chance of this happening is exactly $1/2$. This completes the proof.
\end{proof}


As a consequence of above, we can now obtain that the preimages of non-expanding sets under $\phi$ are non-expanding in $\grassmann(\ell,n)$.

\begin{lemma} \label{lem:preserve-expansion}
Let $T \subseteq \Mat_{\ell,n}$ be a subset satisfying $\Pr_{M \sim T, M' \sim M} [ M' \in T] = \eta$. Then, $\phi^{-1}(T)$ satisfies: $\Pr_{V \sim \phi^{-1}(T), V' \sim V}[ V' \in \phi^{-1}(T)] \geq \eta/2$.
\end{lemma}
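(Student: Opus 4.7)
The plan is to exploit the two properties of $\phi$ established above: that $\phi$ is a graph-isomorphism between the induced subgraph of $\grassmann(\ell,n)$ on $\cV_{\ell}(B)$ and the shortcode graph on $\Mat_{\ell,n-\ell}$ (by the preceding homomorphism lemma), and that by Lemma~\ref{lem:proj-prob} at least half of the $\grassmann$-neighbors of any $V \in \cV_\ell(B)$ lie in $\cV_\ell(B)$. Writing $U = \phi^{-1}(T) \subseteq \cV_\ell(B)$, the strategy is to bound the internal-walk probability of $U$ by first conditioning on staying inside $\cV_\ell(B)$ and then transferring the walk to the shortcode graph via $\phi$.

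Concretely, for any $V \in U$ and a uniformly random $\grassmann$-neighbor $V'$, I would write
\[
\Pr[V' \in U] \;\geq\; \Pr[V' \in \cV_\ell(B)] \cdot \Pr[V' \in U \mid V' \in \cV_\ell(B)].
\]
The first factor is $\geq \tfrac12$ by Lemma~\ref{lem:proj-prob}. For the second, conditional on $V' \in \cV_\ell(B)$, $V'$ is uniform over the $\grassmann$-neighbors of $V$ inside $\cV_\ell(B)$; by the homomorphism property of $\phi$, this set is in bijection with the shortcode-neighbors of $\phi(V)$ in $\cS_{\ell,n-\ell}$. Hence the conditional probability equals the local shortcode non-expansion $|N(\phi(V)) \cap T| / |N(\phi(V))|$, where $N$ denotes shortcode-neighborhood. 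Averaging over $V \sim U$ (equivalently $M = \phi(V) \sim T$, since $\phi$ is a bijection between $U$ and $T$) gives
\[
\Pr_{V \sim U,\, V' \sim V}[V' \in U] \;\geq\; \tfrac12 \cdot \E_{M \sim T}\Brac{\frac{|N(M) \cap T|}{|N(M)|}} \;=\; \tfrac{\eta}{2}.
\]

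I expect no serious conceptual obstacle: both ingredients---the bijection-homomorphism property of $\phi$ and the half-containment statement of Lemma~\ref{lem:proj-prob}---are already established, and the remaining argument is a direct double-counting through the bijection. The sole technicality is that Hypothesis~\ref{conj:inv-short-code} phrases the shortcode walk as $M \mapsto M + ab^\top$ for $a,b$ uniform (admitting self-loops when $a=0$ or $b=0$), whereas the argument above naturally uses the uniform walk over actual shortcode edges; the two quantities agree up to the self-loop mass $O(2^{-\ell}+2^{-(n-\ell)})$, which is negligible in the regime $n \gg \ell$ of interest and can be absorbed into the $\eta/2$ bound.
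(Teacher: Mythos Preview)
Your proposal is correct and follows essentially the same route as the paper: use Lemma~\ref{lem:proj-prob} to lower-bound by $1/2$ the probability that a random Grassmann neighbor of $V\in\cV_\ell(B)$ stays inside $\cV_\ell(B)$, then use the bijection/homomorphism property of $\phi$ to identify the conditional walk with the shortcode walk and recover the factor $\eta$. The paper's own proof is a terse two-line version of exactly this argument (with a minor typo writing $\phi(T)$ for $\phi^{-1}(T)$); your write-up is simply a more careful expansion, and your self-loop remark is an extra sanity check that is not needed for the lemma as stated, since its hypothesis already uses the graph adjacency $M'\sim M$ rather than uniform $a,b$.
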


\begin{proof}
Let $B$ the basis used to construct $\phi$. Then, $\phi(T) \subseteq \cV_{\ell}(B)$.
By Lemma \ref{lem:proj-prob}, 1/2 the neighbors of $\phi(T)$ are contained in $\cV_{\ell}(B)$. By assumption, $\eta$ fraction of these neighbors are contained inside $T$. This finishes the proof. 
\end{proof}

Via a similar application of Lemma \ref{lem:proj-prob}, we can establish an appropriate converse.
\begin{lemma} \label{lem:preserve-expansion-converse}
Let $S \subseteq \cV_{\ell}$ be a subset satisfying $\Pr_{V \sim S, V' \sim V}[V' \in S] \geq \eta$. Then, for a uniformly random choice of basis $B$ for $\F_2^n$, $\E_{B} |\phi(S \cap \cV_{\ell}(B)| = \Omega(|S|)$ and $\Pr_{M, M' \sim \phi(S \cap \cV_{\ell}(B)), M' \sim M}[ M' \in \phi(S \cap \cV_{\ell}(B)] \geq \Omega(\eta)$.
\end{lemma}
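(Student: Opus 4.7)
The plan is to produce a two-part argument: a linearity-of-expectation bound for the size of $\phi(S \cap \cV_\ell(B))$, and a matching lower bound on the number of shortcode edges inside it. The key observation is that a uniformly random basis $B = (b_1, \ldots, b_n)$ of $\F_2^n$ yields, through $W := \sspan(b_{\ell+1}, \ldots, b_n)$, a uniformly random $(n-\ell)$-dimensional subspace, and $V \in \cV_\ell(B)$ is equivalent to $V \cap W = \{0\}$ (so the event depends only on $W$).

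For the first part, because $\phi$ restricted to $\cV_\ell(B)$ is a bijection, linearity of expectation yields
\begin{equation*}
\E_B |\phi(S \cap \cV_\ell(B))| \;=\; \sum_{V \in S} \Pr_B[V \cap W = \{0\}].
\end{equation*}
For any fixed $\ell$-dimensional $V$, the same counting underlying the first half of Lemma~\ref{lem:proj-prob} shows $\Pr_B[V \cap W = \{0\}] \geq \prod_{j \geq 1}(1-2^{-j}) - o(1) > 0.28$. For the expansion, the homomorphism lemma just above implies that adjacency in $\grassmann(\ell,n)$ restricted to $\cV_\ell(B)$ matches adjacency in $\cS_{\ell,n-\ell}$ under $\phi$, while the second half of Lemma~\ref{lem:proj-prob} shows that exactly half of the Grassmann neighbors of any $V \in \cV_\ell(B)$ remain in $\cV_\ell(B)$, matching the shortcode degree $d_S = (2^\ell - 1)(2^{n-\ell} - 1) = d_G/2$, where $d_G$ is the Grassmann degree. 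Thus the number of ordered shortcode edges inside $T(B) := \phi(S \cap \cV_\ell(B))$ equals the number of ordered Grassmann edges inside $S \cap \cV_\ell(B)$, and
\begin{equation*}
\E_B [\text{\# ord.\ edges in } T(B)] \;=\; \sum_{(V,V')\text{ Grassmann edge in }S} \Pr_B[V, V' \in \cV_\ell(B)].
\end{equation*}

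The crux is therefore to show that for each Grassmann-adjacent pair $V, V'$ (so $\dim(V \cap V') = \ell - 1$) one has $\Pr_W[V \cap W = V' \cap W = \{0\}] = \Omega(1)$. Setting $U := V + V'$ (dimension $\ell + 1$), a dimension count shows the joint event is equivalent to $\dim(W \cap U) = 1$ with the unique nonzero vector of $W \cap U$ lying outside $V \cup V'$. Parameterize $W$ as the kernel of a uniformly random surjective linear map $T \colon \F_2^n \to \F_2^\ell$; extending a basis of $U$ to one of $\F_2^n$ and using independence of the images of basis vectors shows that $T|_U$ is uniformly distributed over linear maps $U \to \F_2^\ell$. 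A direct count gives $\Pr[T|_U \text{ surjective}] = (2 - 2^{-\ell})\prod_{j=1}^{\ell}(1-2^{-j}) \geq 1/2$, and conditional on surjectivity the $1$-dimensional kernel is uniform over the $2^{\ell+1}-1$ lines of $U$; since $|V \cup V'| = 3 \cdot 2^{\ell-1}$, approximately a $1/4$ fraction of these lines lie outside $V \cup V'$, so the joint probability is $\Omega(1)$. Consequently $\E_B[\text{\# edges in } T(B)] = \Omega(\eta |S| d_S)$, and dividing by $d_S \cdot \E_B|T(B)| \leq d_S |S|$ gives the $\Omega(\eta)$ bound; a routine averaging step then extracts a specific $B$ witnessing both bounds simultaneously. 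The main obstacle is precisely this joint-complementarity estimate: a naive union bound yields only $2 \cdot 0.28 - 1 < 0$, so one genuinely needs to exploit the shared codimension-one intersection $H = V \cap V'$, which the random-surjection parameterization streamlines into a finite calculation on $U \cong \F_2^{\ell+1}$.
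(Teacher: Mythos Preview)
Your argument is essentially correct and, since the paper does not actually prove this lemma (it only says ``via a similar application of Lemma~\ref{lem:proj-prob}''), your write-up is exactly the kind of detailed proof the paper invites the reader to supply. The structure---linearity of expectation for the size bound, edge-counting via the joint event $V,V'\in\cV_\ell(B)$ for the expansion bound---is precisely what the paper's hint and the analogous calculation inside the proof of the soundness lemma (the ``$0.288\cdot 1/4>0.07$'' step) are pointing at.

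One technical slip: when you parameterize $W=\ker T$ with $T$ uniform over \emph{surjections} $\F_2^n\to\F_2^\ell$, the images of basis vectors are \emph{not} independent, so $T|_U$ is not uniform over all linear maps $U\to\F_2^\ell$. The easy fix is either (i) take $T$ uniform over \emph{all} linear maps, note that $T|_U$ surjective already forces $T$ surjective, so the event you want is contained in $\{T\text{ surjective}\}$ and hence its conditional probability only goes up; or (ii) observe that the law of $T$ (uniform over surjections) is invariant under precomposition by $GL(\F_2^n)$, hence the law of $T|_U$ is $GL(U)$-invariant, so conditional on $T|_U$ surjective its kernel is uniform over lines of $U$, and $\Pr[T|_U\text{ surjective}]$ is at least its value under the uniform-over-all-maps model. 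Either way your $\Omega(1)$ bound on $\Pr_B[V,V'\in\cV_\ell(B)]$ stands.

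A smaller point: the ``routine averaging'' at the end is not quite routine if you insist on a \emph{single} $B$ with both $|T(B)|=\Omega(|S|)$ (constant independent of $\eta$) and non-expansion $\Omega(\eta)$. What your edge count actually gives is a $B$ with $X(B)\ge c\,\eta\,|S|\,d_S$; since $|T(B)|\le|S|$ this already yields non-expansion $\ge c\eta$ for that $B$, and separately $\E_B|T(B)|\ge 0.28|S|$. These are the two clauses of the lemma as stated, so you are done without the joint extraction.
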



Finally, we show that $r$-nice sets in $\grassmann{\ell,n}$ get mapped to $r$-nice sets in $\cS(\ell,n)$ and vice-versa.

\begin{lemma}\label{lem:nice-sets-to-nice-sets}
Let $S \subseteq \cV_{\ell}$ be an $r$-nice set in $\grassmann(\ell,n)$. Then, $\phi_B(S \cap \cV_{\ell}(B))$ is an $r$-nice set in $\cS_{\ell,n}$. Conversely, if $T \subseteq \Mat_{\ell,n}$ is an $r$-nice set in $\cS_{\ell,n}$ then $\phi^{-1}(T) = Q \cap \cV_{\ell}(B)$ for some $r$-nice set $Q$ in $\grassmann(\ell,n)$.
\end{lemma}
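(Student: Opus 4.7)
The plan is to work in the basis $B$ throughout and split every $x \in \F_2^n$ as $x = (x^{(1)}, x^{(2)}) \in \F_2^\ell \times \F_2^{n-\ell}$. Directly from the definition of the canonical basis, if $V \in \cV_\ell(B)$ and $M := \phi_B(V) \in \Mat_{\ell,n-\ell}$ has rows $m_1^\top, \dots, m_\ell^\top$, then every element of $V$ is of the form $(\alpha, M^\top \alpha)$ for a unique $\alpha \in \F_2^\ell$. From this one reads off two elementary ``dictionary entries'' that drive the whole proof: (i) a vector $q = (q^{(1)}, q^{(2)}) \in \F_2^n$ lies in $V$ if and only if $(q^{(1)})^\top M = (q^{(2)})^\top$, which is a \emph{left} affine constraint on $M$; and (ii) a linear functional $\langle w, \cdot\rangle$ with $w = (w^{(1)}, w^{(2)})$ vanishes on all of $V$ if and only if $M w^{(2)} = w^{(1)}$, which is a \emph{right} affine constraint on $M$. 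The whole lemma falls out by applying (i) and (ii) in both directions.

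For the forward direction, write $S = \{V : \tilde Q \subseteq V \subseteq W\}$ with $\dim \tilde Q = r_1$, $\codim W = r_2$, and $r_1 + r_2 = r$. Fix a basis $q_1, \dots, q_{r_1}$ of $\tilde Q$ and a minimal family of defining functionals $\langle w_1, \cdot\rangle, \dots, \langle w_{r_2}, \cdot \rangle$ of $W$. By (i), ``$\tilde Q \subseteq V$'' is equivalent to the $r_1$ linearly independent left constraints $(q_j^{(1)})^\top M = (q_j^{(2)})^\top$; by (ii), ``$V \subseteq W$'' is equivalent to the $r_2$ linearly independent right constraints $M w_j^{(2)} = w_j^{(1)}$. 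Since $\phi_B$ is a bijection $\cV_\ell(B) \to \Mat_{\ell,n-\ell}$, the image $\phi_B(S \cap \cV_\ell(B))$ is precisely the intersection of the resulting left and right affine subspaces, which has total defining parameter $r_1 + r_2 = r$ and is therefore $r$-nice in $\cS_{\ell,n-\ell}$.

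For the converse, let $T \subseteq \Mat_{\ell, n-\ell}$ be $r$-nice, and write it (without loss of generality with linearly independent defining data, discarding any redundant constraints) as the common solution set of left constraints $r_j^\top M = s_j^\top$ for $j \le k$ together with right constraints $M q_i = t_i$ for $i \le r-k$. Define $\tilde Q := \sspan\{(r_j, s_j) : j \le k\} \subseteq \F_2^n$ (of dimension $k$) and $W := \{v \in \F_2^n : \langle (t_i, q_i), v\rangle = 0 \text{ for all } i \le r-k\}$ (of codimension $r-k$). Applying (i) to each generator of $\tilde Q$ and (ii) to each defining functional of $W$ shows that for every $V \in \cV_\ell(B)$, one has $\phi_B(V) \in T$ if and only if $\tilde Q \subseteq V \subseteq W$. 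Hence $\phi_B^{-1}(T) = \tilde S \cap \cV_\ell(B)$ where $\tilde S := \{V \in \cV_\ell : \tilde Q \subseteq V \subseteq W\}$ is $r$-nice in $\grassmann(\ell,n)$.

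The lemma is essentially a bookkeeping translation between two equivalent encodings of the same affine-constraint data, so I do not anticipate any real obstacle. The only minor care point, needed to preserve the parameter $r$ exactly on the nose, is the reduction to a linearly independent defining family on each side before counting; this is automatic by discarding redundant constraints on the shortcode side and by choosing a basis / minimal set of defining functionals on the Grassmann side.
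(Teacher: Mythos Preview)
Your proposal is correct and takes essentially the same approach as the paper: both arguments rest on the observation that, in the $B$-basis, $V \in \cV_\ell(B)$ consists exactly of the vectors $(\alpha, M^\top \alpha)$, from which the ``zoom-in'' condition $q \in V$ becomes a left affine constraint on $M$ and the ``zoom-out'' condition $\langle w,\cdot\rangle|_V = 0$ becomes a right affine constraint. The paper only spells out one sub-case in each direction and declares the rest analogous (with some notational imprecision, e.g.\ writing $M_V \cdot w_i$ where $w_i^{(2)}$ is meant); your version is cleaner in that it states the two dictionary entries (i) and (ii) up front and treats both the $\tilde Q \subseteq V$ and $V \subseteq W$ halves symmetrically.

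One small remark on the care point you flag at the end: to preserve $r$ on the nose in the forward direction you need the $q_j^{(1)}$'s (resp.\ $w_j^{(2)}$'s) to be linearly independent in $\F_2^\ell$ (resp.\ $\F_2^{n-\ell}$), not just the $q_j$'s (resp.\ $w_j$'s) in $\F_2^n$. This is automatic once $S \cap \cV_\ell(B) \neq \emptyset$, since for any $V \in \cV_\ell(B)$ the projection to the first $\ell$ coordinates is a bijection $V \to \F_2^\ell$ (so $\tilde Q \subseteq V$ forces the $q_j^{(1)}$'s independent), and dually a nonzero $w$ with $w^{(2)}=0$ cannot annihilate such a $V$. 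You might make this explicit; the paper does not address it either.
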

\begin{proof}
WLOG, assume that $B = B_0$.
We will assume that $S \subseteq \cV_{\ell}$ is the set of all subspaces in $\cV_{\ell}$ contained in a subspace $W$ of co-dimension $r$. The general case is analogous. Equivalently, if $w_1, w_2, \ldots, w_r$ form a basis for $W$, then, for every $V \in S \cap \cV_{\ell}(B)$ and ever $v \in V$ $\langle v, w_i \rangle  =0$ for every $i$. 

Consider the canonical basis $v_1, v_2, \ldots, v_{\ell}$ for $V$ - recall that this means that the projection of $v_i$ to the first $\ell$ coordinates equal $e_i$. Thus, for every $i$, we can write $v_i = (e_i, v'_i)$ for some vectors $v'_i$ of $n-\ell$ dimensions.

Then, $\phi(V)$ is the matrix $M_V$ with rows $v'_i$ by our construction. In particular, this means that the $M_V$ satisfies the constrain: $M_V \cdot w_i = t_i$ where $t_i$ is the vector with $j$th coordinate equal to $\langle e_j, w_i \rangle$. Thus, we have shown that for every $V \in S$, $\phi(V)$ satisfies a set of $r$ affine linear equations. 

Conversely, observe that if any $M$ satisfies the affine linear equation $M_V w_i = t_i$ as above, the set of all $(e_i,u_i)$ for $i \leq \ell$ where $u_i$ is the $i$th row of $M_V$, must span a subspace in $S$. This yields that $\phi(S \cap \cV_{\ell}(B))$ is an $r$-nice set. 

The converse follows from entirely similar ideas. Suppose $T \subseteq \Mat_{\ell,n}$ is an $r$-nice set. WLOG, we restrict to the case where $T$ is the set of all matrices satisfying linear constraints $M q_i = t_i$ for some choice of $r$ linearly independent constraints $(q_i,t_i)$. Letting $u_1, u_2, \ldots, u_{\ell}$ be the rows of $M$, this implies that every vector $v$ in the span of $(e_i,u_i)$ for $i \leq \ell$ satisfies the linear equation $\langle q, v \rangle  =0$ where $q = (q_i,t_i(1), t_i(2), \ldots, t_i(\ell))$. This immediately yields that $\phi^{-1}(M)$ is contained in a subspace $W$ of co-dimension $r$. Conversely, it is easy to check that for every subspace $V$ of dimension $\ell$ contained in $W \cap \cV_{\ell}(B)$, $\phi(V)$ satisfies the $r$ affine linear constraints above. 

This completes the proof.

\end{proof}

\subsection{Shortcode Test vs Grassmann Test}
We now employ the homomorphism constructed in the previous subsection to relate the soundness and expansion hypothesis in shortcode and Grassmann tests. 

First, we show that the soundness hypothesis for degree 2 shortcode consistency test implies the soundness hypothesis for the Grassmann consistency test and complete the proof of Theorem \ref{cor:soundness-shortcode-vs-grassmann}.

\begin{lemma}
The degree 2 shortcode soundness hypothesis (Hypothesis \ref{hypothesis:soundness-unique}) implies the Grassmann soundness hypothesis (Hypothesis \ref{hypothesis:soundness}).
\end{lemma}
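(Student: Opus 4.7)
The plan is to transport the given Grassmann labeling $F$ to a shortcode labeling via the homomorphism $\phi_B$ from the preceding subsection, invoke the unique shortcode soundness hypothesis (Hypothesis \ref{hypothesis:soundness-unique}), and pull the resulting nice set and affine-linear labeling back to the Grassmann side using Lemma \ref{lem:nice-sets-to-nice-sets}.

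Fix a basis $B$ for now and write $\phi = \phi_B$. For each $V \in \cV_\ell(B)$ with canonical basis $v_1,\ldots,v_\ell$, $v_i = (e_i, M_i)$, define $\tilde F: \Mat_{\ell,n-\ell} \to \F_2^\ell$ by $\tilde F(\phi(V))_i := F(V)(v_i)$. The key computation is that for a shortcode edge $M' = M + ab^\top$ the corresponding pair $V, V' = \phi^{-1}(M), \phi^{-1}(M')$ satisfies
\[
V \cap V' \;=\; \Bigl\{\,\sum_i \lambda_i v_i \;:\; \langle \lambda, a\rangle = 0\,\Bigr\}\mcom
\]
so the Grassmann acceptance condition $F(V)|_{V\cap V'} = F(V')|_{V\cap V'}$ is exactly the 2-to-2 shortcode acceptance condition $\tilde F(M') - \tilde F(M) \in \{0, a\}$. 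Combined with Lemma \ref{lem:proj-prob}, which shows that both endpoints of a random Grassmann edge lie in $\cV_\ell(B)$ with constant probability, this guarantees that $\tilde F$ passes the 2-to-2 shortcode test with probability $\Omega(\delta)$.

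We then apply the uniquifying trick from Lemma \ref{lem:uniquifying}: sampling $h \in \F_2^{n-\ell}$ uniformly and setting $G(M) := \tilde F(M) + M h$, the labeling $G$ passes the unique shortcode test with probability $\Omega(\delta)$, so Hypothesis \ref{hypothesis:soundness-unique} produces constraints $\{M q_j = t_j\}$, $\{r_j^\top M = s_j\}$ together with vectors $z \in \F_2^{n-\ell}$, $u \in \F_2^\ell$ such that $G(M) = Mz + u$ on an $\Omega(1)$ fraction of the nice set. To translate back, Lemma \ref{lem:nice-sets-to-nice-sets} identifies this nice set with $\{V : Q \subseteq V \subseteq W\} \cap \cV_\ell(B)$, where $W \subseteq \F_2^n$ of codimension at most $r$ is cut out by the vectors $(t_j, q_j)$ and $Q$ is spanned by the vectors $(r_j, s_j)$. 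Setting $\hat z := z + h$, the linear functional $f(x,y) := \langle u, x\rangle + \langle \hat z, y\rangle$ on $\F_2^n$ satisfies $f(v_i) = u_i + \langle M_i, \hat z\rangle = (M\hat z + u)_i$ for every canonical basis element, so the shortcode conclusion $\tilde F(M) = M\hat z + u$ unpacks to $F(V) = f|_V$.

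The main obstacle is converting the conditional probability obtained on the shortcode side, normalized over $V \in \cV_\ell(B)$ with $Q \subseteq V \subseteq W$, into the conclusion of Hypothesis \ref{hypothesis:soundness}, which is normalized over \emph{all} such $V$. We would handle this by choosing $B$ at random from the outset: a straightforward adaptation of the counting in Lemma \ref{lem:proj-prob} shows that for any fixed nice set with $\dim Q + \codim W \leq r \ll \ell \ll n$, a uniformly random $V$ with $Q \subseteq V \subseteq W$ lies in $\cV_\ell(B)$ with constant probability over random $B$, so an averaging/Markov argument extracts a single good basis $B$ at a loss of only a constant factor in $\epsilon$. This step is the most delicate precisely because the structural guarantees $Q, W$ produced by Hypothesis \ref{hypothesis:soundness-unique} can depend adversarially on the basis $B$ fixed earlier.
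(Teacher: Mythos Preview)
Your proposal is correct and follows essentially the same route as the paper: transport $F$ to a shortcode labeling via $\phi_B$, verify that Grassmann acceptance on an edge $V\sim V'$ in $\cV_\ell(B)$ translates exactly to the 2-to-2 shortcode acceptance condition, invoke the shortcode soundness hypothesis, and pull the resulting nice set and affine-linear decode back via Lemma~\ref{lem:nice-sets-to-nice-sets}. You are in fact more explicit than the paper in two places: you spell out the uniquifying step (the paper passes directly from ``$G$ passes the degree~2 shortcode consistency test'' to applying Hypothesis~\ref{hypothesis:soundness-unique} without invoking Lemma~\ref{lem:uniquifying}), and you correctly flag the delicacy that the nice set $(Q,W)$ produced on the shortcode side depends on $B$, which the paper handles only by the one-line assertion that $|\cal F \cap \cV_\ell(B)| \geq \Omega(|\cal F|)$ with high probability over~$B$.
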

\begin{proof}
Let $F$ be the assumed labeling strategy in Hypothesis \ref{hypothesis:soundness}. We will construct a labeling strategy for $\cS_{\ell,n}$ from $G$ so that we can apply the conclusion of \ref{hypothesis:soundness-unique}. We will first choose an embedding of the type we constructed before in order to construct $G$.

Let $B \sim \cB$ be chosen uniformly at random and let $\phi = \phi_B$ as in the previous subsection. For any $V \in \cV_{\ell}(B)$, let $F(V) = f$, a linear function restricted to $V$. Let $v_1, v_2, \ldots, v_{\ell}$ be the canonical basis for $V$, i.e., the projection of $v_i$ to the first $\ell$ coordinates (when written in basis $B$) equals $e_i$ for every $i$. Set $G(\phi(V)) = z$ where $z_i = f(v_i)$. Since $\phi$ is a onto, this defines a labeling strategy for all of $\Mat_{\ell,n}$. 

Next, we claim that if $F$ passes the Grassmann consistency test with probability $\eta$ then $G$ passes the degree 2 shortcode consistency test with probability $\Omega(\eta)$. 

Before going on to the proof of this claim, observe that this completes the proof of the lemma. To see this, we first apply Hypothesis \ref{hypothesis:soundness-unique} to conclude that there's an $r$-nice set $Q$ in $\cS_{\ell,n}$ and an affine function defined by $z \in \F_2^{n-\ell}, u \in \F_2^{\ell}$ such that the labeling strategy $H(M) = Mz+u$ passes the degree 2 shortcode consistency test with probability $\delta$ for all $M$ in $Q$. It it easy to construct the an analogous linear strategy for the Grassmann consistency test: For any $V \in \cV_{\ell}(B)$ with the canonical basis $v_1, v_2, \ldots, v_{\ell}$ defined above, set $f(v_i) = u_i + \langle v_i, z \rangle$. Extend $f$ linearly to the span of all such vectors. Finally, extend $f$ to all vectors by taking any linear extension.  From Lemma \ref{lem:proj-prob}, 1/2 the neighbors of vertices in $\phi^{-1}(Q)$ are contained in $\cV_{\ell}(B)$. From Lemma \ref{lem:nice-sets-map-to-nice-sets}, $\phi^{-1}(Q)= \cal{F} \cap \cV_{\ell}(B)$ for some $r$-nice set $\cal{F}$ in $\grassmann(\ell,n)$. Finally, by an argument similar to the one in Lemma \ref{lem:proj-prob}, $|\cal{F} \cap \cV_{\ell}(B)| \geq \Omega(|\cal{F}|)$ with high probability over the draw of $B$. Combining the above three observations yileds that $f$ passes the Grassmann consistency test when restricted to the nice set $\cal{F}$ with probability $\Omega(\delta)$.


We now complete the proof of the claim. This follows immediately if we show that for any $V \sim V'$ chosen from $\cV_{\ell}(B)$, $\Pr_{V \sim V', V,V' \in \cV_{\ell}(B)} [F(V)_{\mid V} = F(V')_{\mid V'}] \geq 0.07 (\eta - 2^{-n+\ell}).$ 

Without loss of generality, we assume that $B$ is the standard basis $\{e_1, e_2, \ldots,e_n\}$. 
First, notice that $ \Span \{ V \cup V'\}$ is of dimension $\ell+1$ for  for all but $2^{-n+\ell}$ fraction of pairs $V \sim V'$. Thus, we can assume that $\Pr_{V \sim V' \mid \dim \Span \{V \cup V'\} = \ell+1} [ F(V)_{\mid V} = F(V')_{\mid V'}] \geq \eta - 2^{-n+\ell}.$

Let $C = B^{-1}$, the basis change matrix corresponding to $B$ and let $C_i$ be the $i^{th}$ row of $C$ and let $C_{[\ell]}$ be the matrix formed by taking the first $\ell$ rows of $C$.
Fix $V \sim V'$ for some $V,V' \in \cV_{\ell}$.  Assume now that $\Span \{ V \cup V' \}$ is of dimension $\ell+1$. Let $v_1, v_2, \ldots, v_{\ell-1}$ be a basis for $V \cap V'$. Let $V = \{ V \cap V' \cup w_1\}$ and $V' = \{V \cap V' \cup w_2 \}$ for some $w_1, w_2$ that linearly independent of each other and of any vector in $V \cap V'$. We estimate the probability that $V,V' \in \cV_{\ell}(B)$.  Then, this is the probability that $v_1, v_2, \ldots, v_{\ell-1}, w_1, w_2$ are mapped by $C^{[\ell]}$ into $a_1, a_2, \ldots,a_{\ell-1}, a_{\ell}, a_{\ell+1}$ respectively, satisfying $a_{\ell} , a_{\ell+1} \not \in \Span \{a_i \mid i\leq \ell-1\}$. It is easy to check that the probability of this over the random choice of $B$ is at least $0.288*1/4 > 0.07$. This proves the claim. 

By taking $n$ large enough (compared to $\ell$), this probability can be made larger than, say, $0.06 \eta$ (say). This finishes the proof.

\end{proof}

Next, we show that the Grassmann Expansion Hypothesis (Hypothesis \ref{hypothesis:expansion}) is equivalent to the Inverse Shortcode Hypothesis (Hypothesis \ref{conj:inv-short-code}) and complete the proof of Theorem \ref{thm:expansion-equivalent}.

\begin{lemma}
The Grassmann Expansion Hypothesis (Hypothesis \ref{hypothesis:expansion}) is equivalent to the Inverse Shortcode Hypothesis (Hypothesis \ref{conj:inv-short-code}).
\end{lemma}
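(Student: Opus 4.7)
The plan is to use the graph homomorphism $\phi_B$ constructed in the previous subsection as a dictionary translating between the two hypotheses. The three structural lemmas we have set up --- \ref{lem:preserve-expansion}, \ref{lem:preserve-expansion-converse} (preservation of non-expansion in either direction) and \ref{lem:nice-sets-to-nice-sets} (two-way correspondence of $r$-nice sets modulo restriction to $\cV_{\ell}(B)$) --- together imply that $\phi_B$ is, up to constant factors that are independent of $\ell$ and $n$, an isomorphism between the two relevant ``expansion + nice set'' categories. The two directions then follow by essentially identical diagram chases.

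For the forward direction (Grassmann Expansion $\Rightarrow$ Inverse Shortcode), first reindex so that the shortcode graph $\cS_{\ell, N}$ corresponds to $\grassmann(\ell, N + \ell)$ via $\phi_B$ for any fixed basis $B$. Given $T \subseteq \Mat_{\ell, N}$ with $\Pr_{M \sim T, M' \sim M}[M' \in T] \geq \eta$, set $S_0 = \phi_B^{-1}(T) \subseteq \cV_{\ell}(B)$. By \ref{lem:preserve-expansion}, $\Pr_{V \sim S_0, V' \sim V}[V' \in S_0] \geq \eta/2$, so we may apply the Grassmann Expansion Hypothesis to $S_0$ and obtain an $r$-nice Grassmann set $\cF$ with $\lvert S_0 \cap \cF \rvert \geq \delta \lvert \cF \rvert$. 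Because $S_0 \subseteq \cV_{\ell}(B)$, we can freely replace $\cF$ by $\cF \cap \cV_{\ell}(B)$ in the intersection, and the push-forward $\cT \coloneqq \phi_B(\cF \cap \cV_{\ell}(B))$ is then, by \ref{lem:nice-sets-to-nice-sets}, an $r$-nice set in $\cS_{\ell, N}$ satisfying $\lvert T \cap \cT \rvert = \lvert S_0 \cap \cF \rvert \geq \delta \lvert \cF \rvert \geq \delta \lvert \cT \rvert$, which is the conclusion of the Inverse Shortcode Hypothesis.

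For the reverse direction (Inverse Shortcode $\Rightarrow$ Grassmann Expansion), start from $S \subseteq \cV_{\ell}$ with $\Pr_{V \sim S, V' \sim V}[V' \in S] \geq \eta$. By \ref{lem:preserve-expansion-converse}, a random basis $B$ yields $T_B \coloneqq \phi_B(S \cap \cV_{\ell}(B))$ of size $\Omega(\lvert S \rvert)$ whose non-expansion in $\cS_{\ell, n-\ell}$ is $\Omega(\eta)$, so we may fix a $B$ achieving both bounds. Apply the Inverse Shortcode Hypothesis to $T_B$ to obtain an $r$-nice $\cT \subseteq \cS_{\ell, n-\ell}$ with $\lvert T_B \cap \cT \rvert \geq \delta \lvert \cT \rvert$, and pull back via \ref{lem:nice-sets-to-nice-sets} to get an $r$-nice Grassmann set $\cQ$ with $\phi_B^{-1}(\cT) = \cQ \cap \cV_{\ell}(B)$. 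Translating the inequality through $\phi_B$ gives $\lvert S \cap \cQ \cap \cV_{\ell}(B) \rvert \geq \delta \lvert \cQ \cap \cV_{\ell}(B) \rvert$, and therefore $\lvert S \cap \cQ \rvert \geq \delta \lvert \cQ \cap \cV_{\ell}(B) \rvert$.

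The main subtlety --- and really the only one --- is that, to match the form of the Grassmann Expansion Hypothesis, we must convert this into $\lvert S \cap \cQ \rvert \geq \delta' \lvert \cQ \rvert$ for some constant $\delta'$, i.e.\ show $\lvert \cQ \cap \cV_{\ell}(B) \rvert \geq c \lvert \cQ \rvert$ for a universal $c > 0$. The catch is that the Grassmann nice set $\cQ$ is produced by the hypothesis \emph{after} $B$ is fixed, so a naive averaging over $B$ is not directly available. The plan is to prove the required bound uniformly: for any $r$-nice $\cQ = \{V : Q_0 \subseteq V \subseteq W_0\}$, the set $\cQ$ is canonically a Grassmannian in $W_0/Q_0$ of dimension $n - r$, and for a uniformly random basis $B$ the event $V \in \cV_{\ell}(B)$ becomes a random-projection event of exactly the type analyzed in \ref{lem:proj-prob}, so $\E_B[\lvert \cQ \cap \cV_{\ell}(B) \rvert] \geq 0.28 \lvert \cQ \rvert$ whenever $\ell - \dim Q_0$ is small compared to $\sqrt{n-r}$. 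Combining this with a Markov / averaging argument carried out jointly with the two bounds from \ref{lem:preserve-expansion-converse} (or, equivalently, a union bound over the bounded number of $r$-nice sets that can appear for the given $\ell, n$) produces a single $B$ for which $\cQ$ has the desired density $\Omega(\delta)$ inside the full Grassmann nice set, completing the proof.
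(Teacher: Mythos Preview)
Your overall approach is the same as the paper's: both directions run through $\phi_B$ together with Lemmas~\ref{lem:preserve-expansion}, \ref{lem:preserve-expansion-converse}, and \ref{lem:nice-sets-to-nice-sets}. The forward direction is correct and essentially identical to what the paper writes. For the reverse direction the paper says only that the argument is ``analogous,'' so you are actually filling in more than the paper does, and you correctly isolate the one real issue: after pulling the shortcode nice set $\cT$ back to a Grassmann nice set $\cQ$, you must convert $\lvert S\cap\cQ\cap\cV_\ell(B)\rvert\ge\delta\lvert\cQ\cap\cV_\ell(B)\rvert$ into $\lvert S\cap\cQ\rvert\ge\delta'\lvert\cQ\rvert$.

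However, your proposed resolution of this point does not work as written. The bound $\E_B\bigl[\lvert\cQ\cap\cV_\ell(B)\rvert\bigr]\ge 0.28\lvert\cQ\rvert$ holds only for \emph{fixed} $\cQ$, whereas here $\cQ=\cQ_B$ is produced by the hypothesis after $B$ is chosen, so the expectation cannot be invoked; and the number of $r$-nice Grassmann sets is exponential in $n$, so the ``union bound over the bounded number of $r$-nice sets'' is not available either.

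The correct fix is deterministic and comes from looking inside the proof of Lemma~\ref{lem:nice-sets-to-nice-sets}. If $\cT$ is defined by right constraints $Mq_i=t_i$ and left constraints $r_j^{\top}M=s_j$ (with the $q_i$ and the $r_j$ each linearly independent, which one may assume after discarding redundancies), then in the basis $B$ the pulled-back nice set is $\cQ=\{V: Q_0\subseteq V\subseteq W_0\}$ with $Q_0=\Span\{(r_j,s_j)\}$ and $W_0^{\perp}=\Span\{(t_i,q_i)\}$. The first $\ell$ coordinates of the generators of $Q_0$ are the $r_j$ and the last $n-\ell$ coordinates of the generators of $W_0^{\perp}$ are the $q_i$; their linear independence forces $\Proj_{[\ell]}(Q_0)$ to have full rank $r_1$ and $\Proj_{[\ell]}(W_0)=\F_2^{\ell}$. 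Under these two structural conditions, the event $V\in\cV_\ell(B)$ for uniform $V\in\cQ$ reduces (after quotienting by $Q_0$) to exactly the random-projection event of Lemma~\ref{lem:proj-prob} in the ambient space $W_0/Q_0\cong\F_2^{\,n-r}$, giving $\lvert\cQ\cap\cV_\ell(B)\rvert\ge 0.28\,\lvert\cQ\rvert$ for this specific $\cQ$ and this specific $B$, with no further randomness needed.
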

\begin{proof}
First, we show that Hypothesis \ref{hypothesis:expansion} implies Hypothesis \ref{conj:inv-short-code}.

Let $S \subseteq \Mat_{\ell,n}$ be such that $\Pr_{M \sim S, a \in \F_2^{\ell}, b \in \F_2^{n}} [ M+ ab^{\top} \in S] = \eta$. 
Then, by Lemma \ref{lem:preserve-expansion}, $\phi_B^{-1}(S)$ has an expansion of $\Omega(\eta)$ in $\grassmann(\ell,n)$. 

Applying the Grassmann expansion hypothesis (Hypothesis \ref{hypothesis:expansion}), we know that there exists a $r$-nice set $\cal{F}$ in $\grassmann(\ell,n)$ such that $|\cal{F} \cap \phi_B^{-1}(S)| \geq \delta|\cal{F}|$. Further, since $\phi_B^{-1}(S) \subseteq \cV_{\ell}(B)$, we must have: $|(\cal{F} \cap \cV_{\ell}(B)) \cap \phi_B^{-1}(S)| \geq \delta | \cal{F} \cap \phi_B^{-1}(S)|$. To finish, observe that by Lemma \ref{lem:nice-sets-to-nice-sets}, $\phi(\cal{F} \cap \phi_B^{-1}(S))$ is an $r$-nice set, say $\cal{Q}$ in $\cS_{\ell,n}$. This, show that $|S \cap \cal{Q}| \geq \delta |\cal{Q}|$ completing the proof.

The proof of the other direction, that is, Hypothesis \ref{conj:inv-short-code} implies Hypothesis \ref{hypothesis:expansion}, is analogous and relies on the use of Lemma \ref{lem:preserve-expansion-converse}.
\end{proof}

\addreferencesection
\bibliographystyle{amsalpha}
\bibliography{bib/refs}

\end{document}